\newtheorem{claim}{Claim}
\begin{document}
\title[A New Fault-Tolerant Synchronization Scheme with Anonymous Pulses]{A New Fault-Tolerant Synchronization Scheme with Anonymous Pulses}

\author{Shaolin Yu}
\email{ysl8088@163.com}
\author{Jihong Zhu}
\author{Jiali Yang}
\author{Wei Lu}
\affiliation{
  \institution{Tsinghua University, Beijing, China}
}

\begin{abstract}
Robust pulse synchronization is fundamental in constructing reliable synchronous applications in wired and wireless distributed systems.
In wired systems, self-stabilizing Byzantine pulse synchronization aims for synchronizing fault-prone distributed components with arbitrary initial states in bounded-delay message-passing systems.
In wireless systems, fault-tolerant synchronization of pulse-coupled oscillators is also built for a similar goal but often works under specific system restrictions, such as low computation power, low message complexity, and anonymous physical pulses whose senders cannot be identified by the receivers.
These restrictions often prevent us from constructing high-reliable wireless synchronous applications.

This paper tries to break barriers between bounded-delay message-passing systems and classical pulse-coupled oscillators by introducing a new fault-tolerant synchronization scheme for the so-called anonymous bounded-delay pulsing systems in the presence of indeterministic communication delays and inconsistent faults.
For low computation power and low message complexity, instead of involving in consensus-based primitives, the proposed synchronization scheme integrates the so-called discrete mean-fields, planar random walks, and some additional easy operations in utilizing only sparsely generated anonymous pulses.
For fault-tolerance, we show that a square-root number of faulty oscillators can be tolerated by utilizing planar random walks in anonymous pulse synchronization.
For self-stabilization, we show that the stabilization can be reached in an expected constant number of observing windows in anonymous bounded-delay pulsing systems with the pulsing-frequency restriction.

Besides, the proposed synchronization scheme can also work in wired systems to eliminate the pulsing-frequency restriction.
For realization, the prototype experiment shows that this synchronization scheme can be easily implemented in real-world systems.
\end{abstract}

%
%


\maketitle

\section{Introduction}
\label{sec:Introduction}
Pulse synchronization is a basic building-block in constructing synchronous applications with real-world communication networks.
For example, the synchronization of pulse-coupled oscillators (PCO) in wireless sensor network (WSN) \citep{Hong2005scalable} is often required to be established with only sparsely generated physical pulses in saving overall energy consumption.
The synchronization schemes taken in VLSI systems-on-chips \citep{Fugger2006,Dolev2014PulseGeneration} are often built upon \emph{binary signals} (or saying the \emph{zero-bit messages}) or at most two-bits messages in avoiding \emph{too costly hardware design}.
Even the fault-tolerant time-triggered (TT) communication \citep{Steiner2008StartupRecovery} prefers \emph{semantic-free messages} or even some \emph{noises} (as some logical pulses) in reducing extra bandwidth consumption for synchronization.
In real-world systems, whatever the referred physical or logical pulses are, the core problem is establishing global synchronization between the distributed time-aware components via unit pulse-like interactions with non-ignorable communication delays.
Moreover, as real-world components are often unreliable, fault tolerance is desired in tolerating a number of faulty components.

For synchronizing distributed systems with real-world communication networks and unreliable components, several approaches have been proposed.
On the one hand, by assuming the faulty oscillators are fail-silent or fail-consistent, synchronization of PCO can be established with anonymous pulses in constant time \citep{Wang2020PCOByzantine}.
However, as slightly-out-of-specification (SOS) failures \citep{Ademaj03evaluationof} are common in many digital communication systems (an example in PCO is manifested as the detection problem in \cite{Hong2005scalable}), the reliability of synchronization systems with such a restricted fault assumption is gravely restricted even in just maintaining the synchronized state of the systems.
On the other hand, by assuming the faulty components can fail arbitrarily, i.e., being Byzantine, robust self-stabilizing synchronization can be established with liner or expected logarithmic time in the deterministic or randomized system settings, respectively \citep{Lenzen2019AlmostConsensus}.
However, although the probabilities of uncovered component failures can be eliminated in the Byzantine-fault-tolerant (BFT) synchronization schemes, the BFT solutions often lack sufficient scalability in computation, storage, and messages.
The gap is that, on the one side, as the complexity of BFT protocols is typically high, further optimization of the existing self-stabilizing BFT synchronization schemes is not an easy task in establishing the initial synchronization with arbitrary initial states then maintaining the synchronized states of the system deterministically.
on the other side, existing easier and cheaper PCO synchronization solutions consider only \emph{benign} faults \citep{Hong2005scalable}, \emph{consistent} faults \citep{Wang2018Attack, Wang2018Stealthy,Wang2020PCOByzantine}, or even constant communication delays, and thus may lack sufficient assumption coverage \citep{Powell1992assumption} in coping with \emph{malign} and \emph{inconsistent} faults in real-world scenarios.

In this paper, to make a better tradeoff between cost, efficiency, fault-tolerance, and timely startup of the pulse-coupled synchronization systems, we propose a new fault-tolerant synchronization scheme by introducing the so-called \emph{discrete mean-field} (DMF) with \emph{planar random walk} (PRW) \citep{Rayleigh1880,Pearson1905Random} in the anonymous bounded-delay (ABD) pulsing systems.
Firstly, by integrating traditional PCO networks with the bounded-delay networks \citep{DolevPulseBoundedDelay2007}, the ABD pulsing system is an extension of the PCO networks by allowing bounded and inconsistent communication delays of the unit pulses.
With the ABD pulsing systems, the DMF is an extension of the continuous mean-field investigated in complex networks with continuous interactions \cite{Dorfler2014Synchronizationcomplexsurvey}.
In digital networks, as the interactions between the oscillators are pulse-like, the DMF is defined with sparsely generated unit pulses other than the sinusoidal signals handled in \cite{Dorfler2014Synchronizationcomplexsurvey}.
Then, with the DMF, the relationship between the classical PRW and randomized fault-tolerant synchronization would be investigated in the ABD pulsing systems.
With this, several DMF-PRW synchronization primitives would be proposed, with which a self-stabilizing fault-tolerant synchronization solution for the ABD pulsing systems would also be included.

The benefit of this work is as follows.
Firstly, with the proposed DMF-PRW synchronization scheme, a shortcut between the classical random walks and the randomized fault-tolerant synchronization is directly built without involving high-complexity operations.
With the basic properties of the classical PRW in the $n$-oscillators pulsing system, the provided DMF-PRW synchronization primitives can work in the presence of $O(\sqrt{n})$ Byzantine faulty components.
More faulty components can be tolerated with specific \emph{malignities} without any changing of the proposed synchronization scheme.
Secondly, as the anonymous DMF can be utilized as a global timing reference without differentiating the senders, practical fault-tolerant synchronization can be more easily realized without costly hardware design.
Thirdly, by analyzing the basic self-stabilizing fault-tolerant synchronization solution, the core self-stabilization problem is expressed with the \emph{fixed-length curves moving} game.
With this, faster self-stabilizing BFT synchronization of the ABD pulsing systems can be further explored with both practical and theoretical interests.

The remainder of this paper is organized as follows.
In section~\ref{sec:Related Works}, the related works are briefly reviewed.
In section~\ref{sec:Models}, the basic system settings and the synchronization problem are given.
In section~\ref{sec:DMF}, the DMF and its basic properties are investigated.
In section~\ref{sec:Solutions}, we discuss how to build synchronization systems with the DMF by integrating the random walks with the mirror, revising, and other possible operations.
In section~\ref{sec:Realizations}, we give a prototype realization to show the basic applicability of the DMF-based synchronization scheme.
Lastly, the paper is concluded in section~\ref{sec:Conclusion}.

\section{Related Works}
\label{sec:Related Works}

Pioneering studies \cite{Huygens1665, Buck1966Biology, Buck1968Mechanism, Winfree1967Biological, Friesen1975Physiological, FriesenSynaptic, Peskin1975Mathematical, ArthurT1980The, Winfree1987Timing, Kuramoto1984, MSBiological1990} on coupled oscillators draw great attention of people in different disciplines.
However, it was not until recent years, progresses in the studies of the coupled oscillators, especially the ones with episodic and pulse-like interactions, are utilized to implement synchronization systems with digital communication networks \cite{Hong2003synchronization, DaliotBiological2003, Hong2005scalable, Degesys2007DESYNC, Patel2007Desynchronization, Degesys2008Desynchronization, Degesys2008DesynchronizationTopologies, Kang2009LocalizedDesynchronization, Pagliari2011Scalable, Nunez2012Bio, Wang2012Energy, Ashkiani2012DiscreteDesynchronization, Hinterhofer2012RD2,Wang2012OptimalPRF, Mauroy2012Kick, Wang2013Increasing, Nunez2015Global, Nunez2015Synchronization, Ferrante2016hybrid, Ferrante2017Robust, Gao2017Desynchronization, Wang2018Attack, Wang2018Stealthy,Wang2020PCOByzantine}.

Inspired by \cite{MSBiological1990}, an early scalable PCO synchronization scheme for WSN \cite{Hong2005scalable} is proposed with numerical simulations.
It is shown that by adopting the classical weakly-coupling strategy, global synchronization can be reached in delayed PCO networks with a sufficient number of oscillators \citep{Hong2005scalable}.
Meanwhile, by setting a relatively high detection threshold in receiving the pulses to avoid false detections, the startup time (or saying \emph{locking time} \citep{Hong2005scalable}) of the synchronization system can be in the order of several pulsing periods in the presence of missed detection of pulses.
However, as the simulations are performed with \emph{averaged network realizations}, the result cannot cover all worst-case scenarios.
Especially, as it is known that the basic coupling mechanism utilized in \cite{MSBiological1990} would fail to synchronize the PCO system in a finite time with some particular initial states, the simulations fail to cover such cases even without considering any faults of the oscillators.
Later works in this approach \citep{Pagliari2011Scalable, Wang2012OptimalPRF, Nunez2012Bio, Wang2012Energy, Wang2013Increasing, Nunez2015Global, Nunez2015Synchronization, Gao2017Desynchronization} show that real-world systems with short startup time can be built with large coupling strengths and optimized phase response functions.
However, the necessary fault tolerance is not covered in the provided solutions and prototype experiments.
In \cite{Wang2018Attack, Wang2018Stealthy,Wang2020PCOByzantine}, some restricted fail-consistent faults are handled in PCO systems.
These restricted fail-consistent faults are referred to as the \emph{stealthy Byzantine attacks} \citep{Wang2018Attack}, since the classical PCO model assumes reliable broadcast and the babbling-idiot faults can often be easily detected (and thus be isolated or removed from the systems).
With reliable broadcast, analysis shows that global synchronization can be reached with arbitrary initial states without considering communication delays \citep{Wang2018Stealthy,Wang2020PCOByzantine}.
However, as is investigated in \cite{Ademaj03evaluationof,Hong2005scalable}, inconsistent failures and communication delays of broadcast may be non-ignorable in real-world communication systems.
In building reliable synchronization in real-world PCO systems, the synchronization protocols should better tolerate inconsistent failures of the faulty oscillators in the communication networks with non-ignorable delays.
This is also the basic system setting originally proposed in \cite{Hong2005scalable}.

Inspired by \cite{Friesen1975Physiological, FriesenSynaptic}, an early deterministic self-stabilizing BFT pulse synchronization solution \citep{DaliotBiological2003} is proposed with bounded-delay networks.
It is shown that deterministic pulse synchronization can be reached in completely connected bounded-delay networks with arbitrary initial states in the presence of a linear number of Byzantine faults.
In the bounded-delay model, the Byzantine faults cover all possible faults of the distributed components and thus guarantee the highest component fault assumption coverage.
Lower-bounded by classical results \citep{Dolev1986Possibility,Dolev1982StrikeAgain}, the resilience of the BFT pulse synchronization solution in \cite{DaliotBiological2003} is optimal as the system can tolerate $f$ Byzantine nodes in the overall $3f+1$ nodes.
Later works in this approach show that deterministic linear stabilization time (the startup time measured with respect to the maximal message delay) can be reached without reliable broadcast.
However, the message complexity of deterministic self-stabilizing BFT pulse synchronization is high.
Meanwhile, although the complexity of randomized self-stabilizing BFT pulse synchronization \citep{Dolev2014PulseGeneration} can be much lower, the expected stabilization time (or saying the average startup time) is linear to the number of all distributed components in the system even with authenticated pulses whose senders can be deduced in the receivers.
In further reducing the startup time, a common-coin-integrated \citep{FM1989,FM1997} randomized solution proposed in \cite{Lenzen2019AlmostConsensus} can reach expected logarithmic startup time.
However, as the common-coin protocols proposed in \cite{FM1989,FM1997} require high polynomial message complexity, the overall complexity of the expected sub-linear time synchronization solution is high \citep{Lenzen2019AlmostConsensus}.
On the whole, although the optimal-resilient self-stabilizing BFT pulse synchronization problem is almost as easy as consensus \citep{Lenzen2019AlmostConsensus}, the integration of exact Byzantine agreement and pulse synchronization is a bit convoluted in the state-of-the-art solutions.
Despite the optimal resilience, it is also interesting to ask if self-stabilizing BFT pulse synchronization can be reached more naturally and straightforwardly.

Other approaches also exist in synchronizing the pulse-coupled systems to perform various synchronous services.
By noticing that some synchronous services such as TDMA communication can be performed with evenly distributed pulsing phases of the oscillators rather than the consistent pulsing phases, synchronization is established with the desired \emph{desynchronization} \citep{Degesys2007DESYNC, Patel2007Desynchronization, Degesys2008DesynchronizationTopologies, Kang2009LocalizedDesynchronization, Ashkiani2012DiscreteDesynchronization, Hinterhofer2012RD2, Gentz2016PulseSS}.
Namely, in the desynchronization approach, the oscillators would directly form some desired pulsing patterns in coordinating the distributed components.
As desynchronization can often be reached with localized synchronization operations, the complexity and networking requirement can be lower than direct global synchronization.
However, without a consistent global reference, the classical fault-tolerant strategies cannot be directly employed in the desynchronization systems in the presence of malign faults.
This restricts the application of desynchronization strategies in reliable systems.
Also, the efficiency of the ad-hoc schedules in the desynchronization solutions is limited compared to the synthetically optimized TT schedules.
Despite the desynchronization approach, there are also other coupling strategies for synchronizing PCO, such as inhibitory coupling \cite{Degesys2007DESYNC,Pagliari2010Bio} with elaborated refractory periods \cite{Konishi2008Synchronizationwithrefractory,Okuda2011Experimental,Buranapanichkit2015Desynchronization}.
However, with our limited knowledge, no PCO synchronization solution considers both inconsistent faults and non-ignorable communication delays with the assumption of a malicious adversary.

\section{System model and the problem}
\label{sec:Models}
The discussed ABD pulsing system $\mathcal{S}$ consists of $n$ nodes, denoted as $V$ with $|V|=n$, that are completely connected in the bidirectional communication network $G=(V,E)$.
With $G$, every node in $V$ can generate and send pulses to all nodes in $V$.
In considering fault-tolerance, the faulty nodes, denoted as $F$ with $|F|\leqslant f$, can make their pulses being inconsistently received in any portion of the nonfaulty nodes $Q=V\setminus F$.
Generally, the pulses can be anonymous or authenticated in the pulsing systems.
In this paper, we prefer the anonymous pulses to the authenticated ones.
In considering the anonymous pulses, we assume that all pulses are generated under the pulsing-frequency restriction.
Namely, over-frequent pulses are not allowed to be generated in any node of $\mathcal{S}$.
Obviously, in wired systems where the identities of the pulse senders can be deduced in the receivers, this pulsing-frequency restriction can be eliminated.

Following the traditional PCO system settings \citep{MSBiological1990,Hong2005scalable}, every nonfaulty node $q\in Q$ maintains a local pulsing phase $\phi_q\in[0,1]$ in running the synchronization protocols.
The pulsing phases can be realized with hardware clocks or special physical devices.
For simplicity, we assume the pulsing phase $\phi_q$ is realized in $q$ with a hardware clock $C_q$ that counts the basic ticks with a finite counter $c_q(t)$ valued in $[[c_{max}]]$, where $c_{max}$ is a sufficiently large integer in scheduling the timers, and $[[m]]=\{0,1,\dots,m-1\}$ is the set of the smallest $m$ nonnegative integers.
For convenience, we assume that the unit of the reference time equals the ideal pulsing cycle that is statically scheduled as $T$ ticks with the hardware clocks.
In considering the imperfectness of the hardware clocks, we assume that $c_q(t)/T$ progresses with speed arbitrarily distributed in $[1-\rho,1+\rho]$ with respect to the reference time $t$, where $\rho$ is the maximal drift-rate of the hardware clocks.
If no \emph{reset} nor \emph{adjustment} of $\phi_q$ happens, $\phi_q(t)$ would progress with the same speed of $c_q(t)/T$.

In each nonfaulty node $q\in Q$, a pulse would be generated at instant $t$ if and only if (iff) $\phi_q(t)$ reaches $1$.
When $\phi_q(t)$ reaches $1$, besides generating the pulse, $q$ would immediately \emph{reset} $\phi_q(t^+)$ as $0$ if no \emph{adjustment} of $\phi_q$ happens (we ignore the speed of $\phi_q$ during $[t,t^+]$).
Besides \emph{resetting} $\phi_q$, $q$ can also \emph{adjust} $\phi_q$ to any values in $[0,1]$ instantaneously at any instant $t$ according to the provided synchronization algorithms.
As the drift rates of the hardware clocks are bounded, all nodes in $Q$ can approximately have the pulsing cycle $1$ if no \emph{adjustment} happens.
With this, we assume that every node cannot generate more than $2(1+\rho)\tau$ pulses in any $\tau$ duration.

Meanwhile, the pulse generated in $q\in Q$ at the pulsing instant $t$ would be sent to all nodes in $V$.
Without loss of generality, we assume the maximal overall delay in sending and processing a pulse between two nonfaulty nodes is bounded within some constant $d>0$ in the bounded-delay networks \citep{DolevPulseBoundedDelay2007}.
Thus, the pulse generated in $q$ at the pulsing instant $t$ would be received and recorded with a hardware timestamp (the receiving tick) in all nonfaulty nodes during $[t,t+d]$.
Further, we assume the overlapped pulses generated in different nodes can be correctly counted by measuring the received signal power \citep{Hong2005scalable} or the number of queued messages.

In considering the malicious adversary, we assume that
1) the adversary can arbitrarily schedule the actual delay within $[0,d]$ for every pulse sent from node $q\in Q$ to node $q'\in Q$;
2) the adversary can arbitrarily schedule the speed of $\phi_q(t)$ for every node $q\in Q$ within $[1-\rho,1+\rho]$, and arbitrarily change this speed in this range with the progress of $t$;
3) all the faulty nodes are under the control of the adversary who can make the pulses of $F$ being arbitrarily received in $Q$ with a malignity index $\sigma\in[0,1]$ (being defined later);
and 4) the adversary knows all the provided algorithms and the past and current states of all nonfaulty nodes in every execution of $\mathcal{S}$.
The so-called malignity index is an extension of the Byzantine malignity, with which the nodes would be Byzantine with $\sigma=1$.

In the presence of such a malicious adversary, all nonfaulty nodes are expected to be globally synchronized.
Here, to discuss the synchronization strategies in a general way, instead of directly employing the pulsing phase $\phi_q$, we use another phase variable $\Phi_q(t)\in [0,1)$ in every node $q\in Q$ to represent the synchronization phase of $q$ at instant $t$.
Similar to the pulsing phase $\phi_q$, if no adjustment of $\Phi_q$ happens, $\Phi_q(t)$ would progress with the same speed of $c_q(t)/T$ and would be reset to $0$ when passing $1$.
Meanwhile, for synchronization, $\Phi_q(t)$ can be periodically adjusted according to the measured global reference.

For conveniently representing the difference of two normalized phases $\Phi,\Phi'\in [0,1]$, we define
\begin{eqnarray}
\label{eq_sync_distance}
\mathring{d}(\Phi,\Phi')=\min\{(\Phi-\Phi')\bmod 1, (\Phi'-\Phi)\bmod 1 \}
\end{eqnarray}
with $\bmod$ being the modular operation.
With this, $(\Pi,\Delta)$-synchronization is reached since $t_0$ iff for all $t\in [t_0,+\infty)$ and all $q,q_1,q_2\in Q$
\begin{eqnarray}
\label{eq_sync_precision}
\mathring{d}(\Phi_{q_1}(t),\Phi_{q_2}(t))\leqslant \Pi ~~~~~~~~~~~~~~\\
\label{eq_sync_accuracy}
\forall t'\in[t,t+1/2]:\mathring{d}(\Phi_{q}(t')-\Phi_{q}(t),t'-t)\leqslant \Delta
\end{eqnarray}
hold, where $\Pi$ and $\Delta$ are respectively called the normalized synchronisation precision and half-cycle synchronisation accuracy.
The half-cycle synchronization accuracy is concerned here as every nonfaulty node would at most adjust its synchronization phase once in every $1/2$ pulsing cycle when the system is synchronized.
So, with the adjustments of the synchronization phases of nonfaulty nodes being bounded in $\Pi$ in the synchronized state of $\mathcal{S}$, the half-cycle synchronization accuracy can also be represented as $\Delta=\rho /2+\Pi$.
In this case, the synchronization quality is mainly determined by $\Pi$.

Besides the synchronization quality, in measuring the efficiency of the synchronization solutions, it is desired that the generated pulses are anonymous and are sparsely generated, the computation performed in every node is easy, the startup time for reaching synchronization is short, and the maximal number of the tolerated malign faults is large.
In this paper, besides the provided DMF-PRW synchronization primitives, we also make the following basic claim.
\begin{claim}
\label{claim_result}
$(\Pi,\Delta)$-synchronization can be reached in $\mathcal{S}$ with arbitrary initial states in an expected constant number of observing windows given in Section \ref{subsec:Measuring}.
\end{claim}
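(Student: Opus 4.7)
The plan is to exploit the DMF as an adversary-robust global reference that every nonfaulty node can sample locally, and then to invoke the planar-random-walk concentration developed in Section~\ref{sec:DMF} to argue that a single observing window already fixes the common phase with constant probability. First I would partition the time axis into the observing windows defined in Section~\ref{subsec:Measuring}; within each window, every $q\in Q$ accumulates the received pulses into its locally computed DMF. Since all nonfaulty pulses are broadcast with delay at most $d$ and the pulsing-frequency restriction caps the contribution of any single node inside the window, the DMFs sampled at any two $q_1,q_2\in Q$ differ only by a deformation reflecting the at most $d$-misalignment of window boundaries together with the faulty contribution controlled by the malignity index $\sigma$.

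Next I would recast the residual disagreement between the two local views as a planar walk. The faulty contribution to the DMF inside a window is a sum of at most $O(f)$ unit increments whose directions the adversary may tune, while the nonfaulty contribution, once sufficiently many nonfaulty pulses fall in the window, lies on a concentrated arc. Using the PRW properties recalled earlier, for $f=O(\sqrt{n})$ the resultant faulty displacement is $O(\sqrt{n})$ in expectation, which is dominated by the $\Theta(n)$ nonfaulty mass. Hence every $q\in Q$ reads essentially the same phase argument, up to an additive error that can be driven below $\Pi$. Adjusting $\Phi_q$ to this argument at the window end yields~\eqref{eq_sync_precision}; inequality~\eqref{eq_sync_accuracy} then follows from the clock-drift bound $\rho$ together with the fact that nonfaulty nodes adjust at most once per half-cycle, recovering $\Delta=\rho/2+\Pi$ as already observed.

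Finally, I would lift the single-window argument to expected constant stabilization time by showing that, regardless of the initial configuration, each window succeeds (produces a joint adjustment that places $\mathcal{S}$ into the $(\Pi,\Delta)$-synchronized state) with probability at least some constant $p_0>0$ independent of $n$, and that failures across disjoint windows are essentially independent because the planar walk resets from window to window and the adversary's per-window advantage does not accumulate. A geometric tail on the window count then yields the claimed expected constant.

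The main obstacle I expect is the passage from per-window PRW concentration to a uniform guarantee over all adversarial initial states and adaptive delay/drift schedulings. In particular, I need to rule out pathological configurations where the nonfaulty arc is itself nearly antipodal so that an $O(\sqrt{n})$ faulty swing can still rotate the resultant by a large angle; this will force a careful calibration of the observing-window length against $d$, $\rho$, and the pulsing-frequency bound, so that sufficiently many nonfaulty pulses are guaranteed to fall inside each window and the reduction to PRW geometry remains well-posed against the worst-case adversary.
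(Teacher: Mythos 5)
Your proposal has a genuine gap at its core: the per-window argument assumes that ``the nonfaulty contribution \dots lies on a concentrated arc'' with $\Theta(n)$ mass, but that presupposes the very synchronization the claim is supposed to establish. The claim is about \emph{arbitrary} initial states, where the nonfaulty phases may be spread around the circle so that the nonfaulty part of the DMF has strength near zero; in that regime ``adjusting $\Phi_q$ to the measured argument at the window end'' tracks nothing meaningful, and the $O(f)$ adversarial pulses (which are worst-case, not random, so Lemma~\ref{lemma_random_walk_r} does not apply to them) can dominate the measured field. In the paper the nonfaulty ``signal'' that must beat the adversary is not $\Theta(n)$ but only the $\Theta(\sqrt{N})$ strength that deliberately randomized pulsing accumulates with constant probability per window (Lemma~\ref{lemma_random_walk_r}, with $\mathtt{R_0}=O(\sqrt{N})$); this is exactly why only $O(\sqrt{n})$ Byzantine nodes are tolerated, whereas your $\Theta(n)$-versus-$O(\sqrt{n})$ comparison would (incorrectly, for stabilization) suggest linear resilience. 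What is missing is the mechanism the claim actually refers to: the feedback loop in which a node performs a random phase step when the measured strength is below $\mathtt{R_0}$ and switches to the overwriting/tracking rule only above it, modeled by the switched discrete-time system (\ref{eq_sys_eq}).

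The second gap is your independence step. The observing window slides, so the state --- the fixed-length curve $\Gamma(k)$ built from the last $N$ records --- persists from window to window, its initial contents $\vec z(N)$ are chosen by the adversary, and the adversary additionally controls the decision bit $b(k)$ whenever the strength falls in $[\mathtt{R_0}-\epsilon_\mathtt{max},\mathtt{R_0}+\epsilon_\mathtt{max}]$. Asserting that ``the planar walk resets from window to window'' and that per-window failures are essentially independent is precisely what has to be proved. The paper's argument handles this in two steps: an eigenvalue/filtering argument showing that the adversarially posed initial vector is damped at an exponential rate (when $b(k)=1$, $\vec 1$ is the dominant eigenvector of $\mathbf A(k)$ and all other eigenvalues have modulus below $1$; when $b(k)=0$ the old state does not feed the new head), reducing the dynamics to (\ref{eq_sys_express_1}); and a geometric diameter/convex-hull argument showing that threshold-boundary overwrites cannot rotate the DMF direction by more than $\pi/2$, so manipulating $b(k)$ cannot make the strength grow more slowly than a plain random walk, whence the threshold $\mathtt{R_0}+\epsilon_\mathtt{max}$ is crossed with constant positive probability in each window and the expected number of windows is constant. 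You explicitly flag the ``nearly antipodal'' pathology as your main obstacle, but you leave it as a calibration issue; in fact it is the substance of the claim, and without the random-walk/overwrite feedback and the filtering argument your outline does not go through.
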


\section{Discrete Mean-Fields}
\label{sec:DMF}
The traditional conception of mean-field is mainly investigated with continuous sinusoidal interactions \cite{Dorfler2014Synchronizationcomplexsurvey}.
Here we first define the DMF in the extended PCO networks with the anonymous pulses and the authenticated ones.
\subsection{Anonymous Discrete Mean-Fields}
\label{subsec:DMF:ADMF}
In classical PCO networks, pulses are often generated as simple physical impulses.
In this situation, as the identities of the senders can hardly be deduced in the receivers, these pulses are referred to as anonymous pulses.
For an anonymous pulse generated in some nonfaulty node at some $t\geqslant t_0$, the \emph{global phase} of this pulse with respect to $t_0$ can be represented as $\theta_{t_0}(t)=(t-t_0)\bmod 1$.
With this, by denoting $P$ as the set of all pulsing instants of the nonfaulty nodes, the (anonymous, as default) DMF in the time interval $[t_{\mathtt{a}},t_{\mathtt{b}}]$ can be generally defined as a complex number
\begin{equation}
\label{eq_dmf}
z(t_{\mathtt{a}},t_{\mathtt{b}})=\int_{t_{\mathtt{a}}}^{t_{\mathtt{b}}} \!e^{2\pi j(t-t_{\mathtt{a}})}\sum_{t'\in P}\delta(t-t') \mathrm {d}t
\end{equation}
where $\delta(x)$ is the dirac delta function satisfying $\int_{-\infty}^{+\infty} \!\delta(x)\mathrm {d}x=1$ and $\delta(x)=0$ for all $x\notin [-\epsilon,\epsilon]$ with a sufficiently small $\epsilon$.

Representing $z=\mathtt{r}(z)e^{2\pi j\psi(z)}$ with $\mathtt{r}(z)\geqslant 0$ being the strength of $z$ and $\psi(z) \in [0,1)$ being the normalized phase-angle of $z$, the DMF has the following basic property.
\begin{lemma}
\label{lemma_basic_dmf}
If $t_{\mathtt{a}}\leqslant t_{\mathtt{b}}< t_{\mathtt{c}}\leqslant t_{\mathtt{d}}$, $\mathtt{r}(z(t_{\mathtt{a}},t_{\mathtt{c}}))>0$, $\mathtt{r}(z(t_{\mathtt{b}},t_{\mathtt{c}}))>0$, and $\mathtt{r}(z(t_{\mathtt{b}},t_{\mathtt{d}}))>0$, then
\begin{equation}
\label{eq_dmf_r}
\frac{|\mathtt{r}(z(t_{\mathtt{a}},t_{\mathtt{c}}))-\mathtt{r}(z(t_{\mathtt{b}},t_{\mathtt{d}}))|}{\mathtt{r}(z(t_{\mathtt{b}},t_{\mathtt{c}}))}\leqslant \gamma(t_{\mathtt{a}},t_{\mathtt{b}},t_{\mathtt{c}},t_{\mathtt{d}})
\end{equation}
\begin{equation}
\label{eq_dmf_arg}
\sin 2\pi \mathring{d}(\psi(z(t_{\mathtt{a}},t_{\mathtt{c}})),\psi(z(t_{\mathtt{b}},t_{\mathtt{d}})))\leqslant \gamma(t_{\mathtt{a}},t_{\mathtt{b}},t_{\mathtt{c}},t_{\mathtt{d}})
\end{equation}
hold with
\begin{equation}
\label{eq_gamma}
\gamma(t_{\mathtt{a}},t_{\mathtt{b}},t_{\mathtt{c}},t_{\mathtt{d}})=\frac{\mathtt{r}(z(t_{\mathtt{a}},t_{\mathtt{b}}))+\mathtt{r}(z(t_{\mathtt{c}},t_{\mathtt{d}}))}{\mathtt{r}(z(t_{\mathtt{b}},t_{\mathtt{c}}))}
\end{equation}
\end{lemma}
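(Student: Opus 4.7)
My plan is to reduce both inequalities to the fundamental additive decomposition of the DMF over nested intervals. Starting from \eqref{eq_dmf} and splitting the integration domain at $t_{\mathtt{b}}$, I would obtain $z(t_{\mathtt{a}},t_{\mathtt{c}}) = z(t_{\mathtt{a}},t_{\mathtt{b}}) + e^{2\pi j(t_{\mathtt{b}}-t_{\mathtt{a}})}z(t_{\mathtt{b}},t_{\mathtt{c}})$, and analogously $z(t_{\mathtt{b}},t_{\mathtt{d}}) = z(t_{\mathtt{b}},t_{\mathtt{c}}) + e^{2\pi j(t_{\mathtt{c}}-t_{\mathtt{b}})}z(t_{\mathtt{c}},t_{\mathtt{d}})$; both identities express one of the two DMFs in the lemma as a perturbation of the common pivot $z(t_{\mathtt{b}},t_{\mathtt{c}})$, with perturbation magnitudes that are exactly the terms appearing in the numerator of \eqref{eq_gamma}. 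The whole lemma should then fall out of triangle-type inequalities applied to these two identities.

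For \eqref{eq_dmf_r}, since the rotation factor has modulus $1$, the reverse triangle inequality applied to each identity yields $\bigl||z(t_{\mathtt{a}},t_{\mathtt{c}})| - |z(t_{\mathtt{b}},t_{\mathtt{c}})|\bigr| \leq |z(t_{\mathtt{a}},t_{\mathtt{b}})|$ and $\bigl||z(t_{\mathtt{b}},t_{\mathtt{d}})| - |z(t_{\mathtt{b}},t_{\mathtt{c}})|\bigr| \leq |z(t_{\mathtt{c}},t_{\mathtt{d}})|$. Combining these via the real-line triangle inequality gives $\bigl||z(t_{\mathtt{a}},t_{\mathtt{c}})| - |z(t_{\mathtt{b}},t_{\mathtt{d}})|\bigr| \leq |z(t_{\mathtt{a}},t_{\mathtt{b}})| + |z(t_{\mathtt{c}},t_{\mathtt{d}})|$, and dividing by $|z(t_{\mathtt{b}},t_{\mathtt{c}})|$ is exactly \eqref{eq_dmf_r}.

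For \eqref{eq_dmf_arg}, I would invoke the elementary geometric fact that for complex $u,v$ with $u \neq 0$ and $u+v \neq 0$, the perpendicular distance from $u+v$ to the line through $0$ and $u$ is at most $|v|$, hence $|\sin(\arg(u+v) - \arg u)| \leq |v|/|u|$. Applied with $(u,v) = (e^{2\pi j(t_{\mathtt{b}}-t_{\mathtt{a}})} z(t_{\mathtt{b}},t_{\mathtt{c}}), z(t_{\mathtt{a}},t_{\mathtt{b}}))$, this bounds the sine of the angle between $z(t_{\mathtt{a}},t_{\mathtt{c}})$ and the rotated pivot by $|z(t_{\mathtt{a}},t_{\mathtt{b}})|/|z(t_{\mathtt{b}},t_{\mathtt{c}})|$; applied with $(u,v) = (z(t_{\mathtt{b}},t_{\mathtt{c}}), e^{2\pi j(t_{\mathtt{c}}-t_{\mathtt{b}})} z(t_{\mathtt{c}},t_{\mathtt{d}}))$, it bounds the sine of the angle between $z(t_{\mathtt{b}},t_{\mathtt{d}})$ and $z(t_{\mathtt{b}},t_{\mathtt{c}})$ by $|z(t_{\mathtt{c}},t_{\mathtt{d}})|/|z(t_{\mathtt{b}},t_{\mathtt{c}})|$. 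Combining via $|\sin(A - B)| \leq |\sin A| + |\sin B|$ then yields $\gamma$ as an upper bound on $\sin 2\pi \mathring{d}(\psi(z(t_{\mathtt{a}},t_{\mathtt{c}})), \psi(z(t_{\mathtt{b}},t_{\mathtt{d}})))$.

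The hardest part, I expect, will be bookkeeping the two unit-modulus rotation factors $e^{2\pi j(t_{\mathtt{b}}-t_{\mathtt{a}})}$ and $e^{2\pi j(t_{\mathtt{c}}-t_{\mathtt{b}})}$ in the phase argument: each perturbation estimate is naturally measured against a rotated pivot rather than against $z(t_{\mathtt{b}},t_{\mathtt{c}})$ itself, and one must verify that when translating the two angular deviations to a common reference before invoking $|\sin(A-B)| \leq |\sin A|+|\sin B|$, the rotations contribute no residual phase beyond what is absorbed into the sine-of-difference estimate. Once that alignment is clean, the two perturbation bounds combine additively into $\gamma$ exactly as claimed in \eqref{eq_dmf_arg}.
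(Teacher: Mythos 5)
Your proposal follows essentially the same route as the paper's own proof: decompose the DMF additively at $t_{\mathtt{b}}$ and $t_{\mathtt{c}}$, use (reverse) triangle inequalities for the strength bound (\ref{eq_dmf_r}), and bound each phase deviation against the common pivot $z(t_{\mathtt{b}},t_{\mathtt{c}})$ by the perpendicular-distance/sine estimate before combining the two via subadditivity of the sine, yielding $\gamma$ exactly as in (\ref{eq_dmf_arg}). The rotation-factor bookkeeping you flag as the delicate point is treated no more carefully in the paper (its proof silently compares phases against the unrotated pivot), so your sketch matches the published argument in both structure and level of rigor.
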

\begin{proof}
With (\ref{eq_dmf}), by definition we have $z(t_{\mathtt{a}},t_{\mathtt{c}})=z(t_{\mathtt{a}},t_{\mathtt{b}})+ e^{2\pi j(t_{\mathtt{b}}-t_{\mathtt{a}})}z(t_{\mathtt{b}},t_{\mathtt{c}})$.
So we have $z(t_{\mathtt{a}},t_{\mathtt{c}})-e^{2\pi j(t_{\mathtt{b}}-t_{\mathtt{a}})}z(t_{\mathtt{b}},t_{\mathtt{d}})=z(t_{\mathtt{a}},t_{\mathtt{b}})-e^{2\pi j(t_{\mathtt{c}}-t_{\mathtt{a}})}z(t_{\mathtt{c}},t_{\mathtt{d}})$.
With this, as $|\mathtt{r}(z(t_{\mathtt{a}},t_{\mathtt{c}}))-\mathtt{r}(z(t_{\mathtt{b}},t_{\mathtt{d}}))|\leqslant \mathtt{r}(z(t_{\mathtt{a}},t_{\mathtt{c}})-e^{2\pi j(t_{\mathtt{b}}-t_{\mathtt{a}})}z(t_{\mathtt{b}},t_{\mathtt{d}}))$ and $\mathtt{r}(z(t_{\mathtt{a}},t_{\mathtt{b}})-e^{2\pi j(t_{\mathtt{c}}-t_{\mathtt{a}})}z(t_{\mathtt{c}},t_{\mathtt{d}}))\leqslant \mathtt{r}(z(t_{\mathtt{a}},t_{\mathtt{b}}))+\mathtt{r}(z(t_{\mathtt{c}},t_{\mathtt{d}}))$, we have (\ref{eq_dmf_r}).
With $\mathtt{r}(z(t_{\mathtt{a}},t_{\mathtt{c}}))>0$, $\mathtt{r}(z(t_{\mathtt{b}},t_{\mathtt{c}}))>0$, and $\mathtt{r}(z(t_{\mathtt{b}},t_{\mathtt{d}}))>0$, as $\sin 2\pi \mathring{d}(\psi(z(t_{\mathtt{a}},t_{\mathtt{c}})),\psi(z(t_{\mathtt{b}},t_{\mathtt{c}})))\leqslant \mathtt{r}(z(t_{\mathtt{a}},t_{\mathtt{b}}))/\mathtt{r}(z(t_{\mathtt{b}},t_{\mathtt{c}}))$ and $\sin 2\pi \mathring{d}(\psi(z(t_{\mathtt{b}},t_{\mathtt{d}})),\psi(z(t_{\mathtt{b}},t_{\mathtt{c}})))\leqslant \mathtt{r}(z(t_{\mathtt{c}},t_{\mathtt{d}}))/\mathtt{r}(z(t_{\mathtt{b}},t_{\mathtt{c}}))$, we have (\ref{eq_dmf_arg}).
\end{proof}

\subsection{Authenticated Discrete Mean-Fields}
To show the difference between the anonymous pulses and the identified messages, here we also define the DMF with the so-called \emph{authenticated} pulses.
In classical message-passing systems, the senders of the messages can often be identified by the receivers.
For example, in wired communication networks, the identity of the immediate sender of a message can be deduced from the corresponding communication channel.
In general, the pulses with identified senders are referred to as the authenticated pulses.
With the authenticated pulses, denoting $p_{t_0}^{(q)}$ as the earliest pulsing instant of $q$ since $t_0$, the authenticated DMF with respect to $t_0$ can be defined as
\begin{equation}
\label{eq_admf}
z(t_0)=\sum_{q\in Q}e^{2\pi j(p_{t_0}^{(q)}-t_0)}
\end{equation}

Denoting $P_{t_0}^{(q)}$ as the set of all pulsing instants of the nonfaulty node $q$ since $t_0$, for all $t_1,t_2\in P_{t_0}^{(q)}$, we have
\begin{equation}
\label{eq_admf_d}
\mathring{d}(\theta_{t_0}(t_1), \theta_{t_0}(t_2))\leqslant \rho|t_1-t_2|
\end{equation}
if $q$ does not adjust its pulsing phase.
Thus, denoting
\begin{equation}
\varepsilon_\tau=\rho(\tau+(1-\rho)^{-1})
\end{equation}
the authenticated DMF has the following basic property.
\begin{lemma}
\label{lemma_basic_admf}
If no node in $Q$ adjusts its pulsing phase since $t_0$, for all $\tau\geqslant 0$ we have
\begin{equation}
\label{eq_admf_r}
\mathtt{r}(e^{2\pi j\tau}z(t_0+\tau)-z(t_0))\leqslant|Q|\varepsilon_\tau
\end{equation}
\end{lemma}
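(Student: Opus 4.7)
The plan is to expand the complex difference as a sum indexed by $q\in Q$ and then control each summand by the single-node drift inequality (\ref{eq_admf_d}). Starting from the definition (\ref{eq_admf}), I rewrite
\begin{equation*}
e^{2\pi j\tau}z(t_0+\tau)=\sum_{q\in Q}e^{2\pi j\tau}\cdot e^{2\pi j(p_{t_0+\tau}^{(q)}-(t_0+\tau))}=\sum_{q\in Q}e^{2\pi j(p_{t_0+\tau}^{(q)}-t_0)},
\end{equation*}
so the quantity whose magnitude I want to bound becomes $\sum_{q\in Q}\bigl(e^{2\pi j(p_{t_0+\tau}^{(q)}-t_0)}-e^{2\pi j(p_{t_0}^{(q)}-t_0)}\bigr)$. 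The triangle inequality for $\mathtt{r}(\cdot)$ then reduces the task to showing that each per-node chord $|e^{2\pi j(p_{t_0+\tau}^{(q)}-t_0)}-e^{2\pi j(p_{t_0}^{(q)}-t_0)}|$ is at most $\varepsilon_\tau$.

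For a fixed $q\in Q$, because $\tau\geqslant 0$ forces $p_{t_0+\tau}^{(q)}\geqslant p_{t_0}^{(q)}\geqslant t_0$, both endpoints lie in $P_{t_0}^{(q)}$, so (\ref{eq_admf_d}) applies and yields
\begin{equation*}
\mathring{d}\bigl(\theta_{t_0}(p_{t_0+\tau}^{(q)}),\theta_{t_0}(p_{t_0}^{(q)})\bigr)\leqslant \rho\bigl(p_{t_0+\tau}^{(q)}-p_{t_0}^{(q)}\bigr).
\end{equation*}
The chord $|e^{2\pi ja}-e^{2\pi jb}|$ on the unit circle is controlled by the circular distance $\mathring{d}(a,b)$, so once the reference-time gap $p_{t_0+\tau}^{(q)}-p_{t_0}^{(q)}$ is bounded, the per-node contribution is bounded by $\rho$ times that gap, and the total by the sum of these $|Q|$ contributions.

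The main obstacle is thus the bound $p_{t_0+\tau}^{(q)}-p_{t_0}^{(q)}\leqslant \tau+(1-\rho)^{-1}$, which I would extract from the clock-drift restriction. Since $\phi_q$ progresses with speed in $[1-\rho,1+\rho]$ and no adjustment occurs since $t_0$, a full pulsing cycle of $q$ consumes reference-time in $[(1+\rho)^{-1},(1-\rho)^{-1}]$; in particular, the earliest pulse of $q$ occurring at or after any instant $t\geqslant t_0$ must occur by $t+(1-\rho)^{-1}$. Applying this to $t=t_0+\tau$ gives $p_{t_0+\tau}^{(q)}\leqslant t_0+\tau+(1-\rho)^{-1}$, while by definition $p_{t_0}^{(q)}\geqslant t_0$, so the gap is at most $\tau+(1-\rho)^{-1}$. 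Multiplying by $\rho$ produces $\varepsilon_\tau$ per node, and summing the $|Q|$ per-node bounds furnishes the desired $|Q|\varepsilon_\tau$.

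Beyond establishing this uniform pulsing-cycle upper bound, everything else is just triangle inequality plus the previously-derived single-node inequality (\ref{eq_admf_d}); the cycle-length argument is the only place where the hypothesis of no pulsing-phase adjustment since $t_0$ is actually used.
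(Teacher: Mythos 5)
Your proposal matches the paper's own argument: both expand $e^{2\pi j\tau}z(t_0+\tau)-z(t_0)$ as a per-node sum of chords, bound each chord via the drift inequality (\ref{eq_admf_d}) together with the observation that $p_{t_0}^{(q)}\geqslant t_0$ and $q$ must pulse within $(1-\rho)^{-1}$ of $t_0+\tau$, giving the gap bound $\tau+(1-\rho)^{-1}$, and then sum over the $|Q|$ nodes by the triangle inequality. Your explicit justification of the cycle-length bound is a slightly more detailed rendering of the paper's remark that $q$ always pulses during $[t_0+\tau,\,t_0+\tau+(1-\rho)^{-1}]$, but the route is essentially identical.
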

\begin{proof}
By definition, we have $e^{2\pi j\tau}z(t_0+\tau)-z(t_0)=e^{2\pi j\tau}\sum_{q\in Q}e^{2\pi j(p_{t_0+\tau}^{(q)}-t_0-\tau)}-\sum_{q\in Q}e^{2\pi j(p_{t_0}^{(q)}-t_0)}=e^{-2\pi jt_0}\sum_{q\in Q}(e^{2\pi j p_{t_0+\tau}^{(q)}}-e^{2\pi j p_{t_0}^{(q)}})$.
As $p_{t_0+\tau}^{(q)},p_{t_0}^{(q)}\in P_{t_0}^{(q)}$, the ideal pulsing cycle is $1$, and $q$ would always generate a pulse during $[t_0+\tau,t_0+\tau+(1-\rho)^{-1}]$, with (\ref{eq_admf_d}) we have $\mathtt{r}(e^{2\pi j p_{t_0+\tau}^{(q)}}-e^{2\pi j p_{t_0}^{(q)}})\leqslant \rho(\tau+(1-\rho)^{-1})$ and thus we have (\ref{eq_admf_r}).
\end{proof}

\subsection{Generating the Basic Discrete Mean-Fields}
Generally, with the hardware clock $C_q$, every node $q\in Q$ can schedule a pulsing timer $\kappa_q$ (would expire when it is caught up by $c_q(t)$) as
\begin{equation}
\label{eq_sch_p}
\kappa_q(t^+)=(c_q(t)+(1+ x_q(t))T)\bmod c_{max}
\end{equation}
when $\phi_q(t)$ reaches $1$ at $t$, where $ x_q(t)\in [-1/2,1/2]$ is the normalized adjustment of $\phi_q$ at $t$.
For example, if no adjustment of $\phi_q$ happens, $ x_q(t)=0$ always holds.
In randomly adjusting $\phi_q$, $ x_q(t)$ would be randomly distributed in $[-1/2,1/2]$.
In practice, $\phi_q$ can be adjusted at any instants other than just at the pulsing instants.
Here we just use $ x_q$ to simplify the core problem.

Assuming $ x_i\sim U(-1/2,1/2)$ for all $i\in [[N]]$ (with $U$ being the uniform distribution and $N$ being a sufficiently large integer), $ x_{i_1}$ and $ x_{i_2}$ are independent for every two different $i_1,i_2\in N$, and denoting the probability of the occurrence of $\mathtt{E}$ as $Pr(\mathtt{E})$, the basic result of random walks in the plane \citep{Rayleigh1880} can be restated in the complex plane as follows.
\begin{lemma}\citep{Rayleigh1880}
\label{lemma_random_walk_r}
For $r=O(\sqrt{N})$ we have $Pr(\mathtt{r}(\sum_{i\in [[N]]}e^{2\pi j  x_i})\geqslant r)\approx e^{-r^2/N}$.
\end{lemma}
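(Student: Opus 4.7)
The plan is to view $S_N := \sum_{i \in [[N]]} e^{2\pi j x_i}$ as a planar random walk whose $N$ steps are i.i.d.\ unit vectors with uniformly distributed angles, and to extract the Rayleigh tail from a bivariate central limit theorem applied to its Cartesian components. Writing $S_N = X_N + j Y_N$ with $X_N = \sum_i \cos(2\pi x_i)$ and $Y_N = \sum_i \sin(2\pi x_i)$, the modulus $\mathtt{r}(S_N) = \sqrt{X_N^2 + Y_N^2}$ is determined by the joint law of $(X_N, Y_N)$.

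First I would record the one-step moments under $x_i \sim U(-1/2, 1/2)$. Parity and direct integration give $E[\cos(2\pi x_i)] = E[\sin(2\pi x_i)] = E[\cos(2\pi x_i)\sin(2\pi x_i)] = 0$ together with $E[\cos^2(2\pi x_i)] = E[\sin^2(2\pi x_i)] = 1/2$. Combined with independence across $i$, this gives $\mathrm{Var}(X_N) = \mathrm{Var}(Y_N) = N/2$ and $\mathrm{Cov}(X_N, Y_N) = 0$, so $(X_N, Y_N)$ is a sum of $N$ i.i.d., mean-zero, bounded two-dimensional vectors with per-step covariance $\frac{1}{2} I$.

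Next, the bivariate CLT implies that $(X_N/\sqrt{N/2},\, Y_N/\sqrt{N/2})$ converges in distribution to a standard bivariate normal, so in the regime $r = O(\sqrt{N})$ the pair $(X_N, Y_N)$ is approximately Gaussian with density $\frac{1}{\pi N} \exp(-(x^2+y^2)/N)$. Passing to polar coordinates makes the angular marginal uniform on $[0, 2\pi)$ and the radial marginal Rayleigh with scale parameter $\sigma^2 = N/2$, whence
\begin{equation*}
Pr\bigl(\mathtt{r}(S_N) \geqslant r\bigr) \approx \int_r^\infty \frac{2s}{N} e^{-s^2/N} \, ds = e^{-r^2/N},
\end{equation*}
which is exactly the claim.

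The principal difficulty --- and the reason the statement carries ``$\approx$'' rather than ``$=$'' --- is controlling the Gaussian approximation error uniformly, which outside the scaling window $r = O(\sqrt{N})$ would require a Berry--Esseen-type or Edgeworth refinement to the bivariate CLT. Because the lemma will only be invoked in this window, the standard bivariate CLT together with the moment computation above is enough, and no such quantitative refinement is needed.
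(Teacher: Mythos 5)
Your proposal is correct. The moment computations are right ($2\pi x_i$ is uniform on $[-\pi,\pi]$, so each step has mean zero, per-component variance $1/2$, and zero cross-covariance), the bivariate CLT then gives the approximate Gaussian density $\frac{1}{\pi N}e^{-(x^2+y^2)/N}$ for the Cartesian components, and the polar integration $\int_r^\infty \frac{2s}{N}e^{-s^2/N}\,ds = e^{-r^2/N}$ is exact, so the stated tail follows in the window $r=O(\sqrt{N})$. Note, however, that the paper does not prove this lemma at all: its ``proof'' is literally a pointer to Rayleigh's 1880 paper, and immediately afterwards the authors announce that they will ignore the approximation error and simply use $e^{-r^2/N}$ as the probability. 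So your argument is not a variant of the paper's proof but a self-contained replacement for the citation, following the standard modern route (bivariate CLT plus the Rayleigh radial law) rather than Rayleigh's original analysis. What your version buys is an explicit derivation of the constant in the exponent and a clear statement of where the approximation can break down (outside the $r=O(\sqrt{N})$ window, where a Berry--Esseen or moderate-deviations refinement would be needed); what the citation buys the authors is brevity and access to sharper classical statements, e.g.\ the exact Kluyver--Pearson Bessel-integral formula for the distribution of the resultant, which also covers the corrections for moderate $N$ that the paper alludes to via \cite{Wan2010Random}. The only caveat worth adding to your write-up is that convergence in distribution gives the approximation for each fixed ratio $r/\sqrt{N}$ (and uniformly over compact ranges of that ratio, since the limit law is continuous); that level of uniformity is exactly what the paper's later, admittedly informal, probability estimates require, so your proof matches the paper's intended use.
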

\begin{proof}
See \cite{Rayleigh1880}.
\end{proof}

This basic result can also be extended with smaller $N$ and the overall distribution of $r$ \citep{Wan2010Random}.
For simplicity here, we assume $N$ is sufficiently large and $r=O(\sqrt{N})$.
With this, we ignore the errors in estimating the probabilities with Lemma~(\ref{lemma_random_walk_r}) and would approximately use $e^{-r^2/N}$ as the corresponding probabilities.

\subsection{Measuring Discrete Mean-Fields}
\label{subsec:Measuring}
Firstly, denoting the receiving ticks of all received pulses during the last $\omega T$ ticks in $q\in Q$ as $R_\omega^{(q)}$, the anonymous DMF can be measured in $q$ at $t$ as
\begin{equation}
\label{eq_measuring_dmf}
\hat{z}_\omega^{(q)}(t)=\int_{c_q(t)-\omega T}^{c_q(t)} \!e^{2\pi j(\frac{c-c_q(t)}{T}+\omega)}\sum_{c'\in R_\omega^{(q)}}\delta(c-c') \mathrm {d}c
\end{equation}

For instantaneous receptions (in digital networks), (\ref{eq_measuring_dmf}) can be simplified as
\begin{equation}
\label{eq_measuring_dmf_simp}
\hat{z}_\omega^{(q)}(t)=\sum_{c'\in R_\omega^{(q)}}e^{2\pi j(\frac{c'-c_q(t)}{T}+\omega)}
\end{equation}

In measuring the authenticated DMF, as the identities of the pulse senders cannot be forged, the node $q\in Q$ can record one and only one latest received pulse for every node in $V$ in a sufficiently long observing window.
With this, if no pulse from $i\in V$ is received during this observing window in $q$, $q$ can take $i$ as a faulty node and use the pulse of $q$ as that of $i$ in estimations.
Assuming this observing window is also $\omega T$ ticks long, the authenticated DMF can be measured in $q$ as
\begin{equation}
\label{eq_measuring_dmf_auth}
\hat{z}_{\omega,V}^{(q)}(t)=\sum_{i\in V}e^{2\pi j(\frac{c_{\omega,i}^{(q)}-c_q(t)}{T}+\omega)}
\end{equation}
where $c_{\omega,i}^{(q)}$ is the latest receiving tick of the pulse from $i$ during the last $\omega T$ ticks in $q\in Q$.
For convenience, the $\omega T$-ticks observing window in a nonfaulty node is called the $\omega$-window.

\section{Synchronization with the Walks}
\label{sec:Solutions}
\subsection{Authenticated One-Kick Synchronization}
As a warm-up exercise, let us proceed first with the authenticated pulses.
In this case, a trivial synchronization strategy generates some randomized DMF and then utilizes it as the global reference.
Concretely, during the startup of the system, every node $q\in Q$ would generate its first pulse with $ x_q\sim U(-1/2,1/2)$ and would not adjust $\phi_q$ afterward.
Assuming all nonfaulty nodes have generated their first pulses during $[t_0-\tau,t_0]$ independently and denoting
\begin{equation}
r_{\alpha,\tau}=\sqrt{\alpha(n-f)}- n\varepsilon_\tau
\end{equation}
the basic \emph{one-kick} DMF has the following property, providing that $n$ is sufficiently large.

\begin{lemma}
\label{lemma_one_kick_r}
$\forall \alpha\geqslant 0:Pr(\mathtt{r}(z(t_0))\geqslant r_{\alpha,\tau})\geqslant e^{-\alpha}$.
\end{lemma}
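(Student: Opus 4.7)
The plan is to combine two ingredients already developed in the paper: the planar random walk estimate (Lemma~\ref{lemma_random_walk_r}) applied at some earlier time $t_0-\tau$ where the first-pulse phases are genuinely uniform, and the continuity bound on the authenticated DMF from Lemma~\ref{lemma_basic_admf} to transfer that strength forward to $t_0$. The quantity $r_{\alpha,\tau}=\sqrt{\alpha(n-f)}-n\varepsilon_\tau$ splits cleanly into these two pieces, which suggests exactly this two-step structure.

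First I would observe that, by the startup rule in (\ref{eq_sch_p}) with $x_q\sim U(-1/2,1/2)$, the first pulsing instant of each $q\in Q$ is distributed so that its phase modulo the nominal cycle is uniform on $[-1/2,1/2]$, and these phases are independent across different $q\in Q$ by the independence assumption on the $x_q$'s. Writing the authenticated DMF anchored at $t_0-\tau$, one has $z(t_0-\tau)=\sum_{q\in Q}e^{2\pi j(p_{t_0-\tau}^{(q)}-(t_0-\tau))}$, and by the setup the exponents $(p_{t_0-\tau}^{(q)}-(t_0-\tau))\bmod 1$ are i.i.d.\ uniform on $[0,1)$ (equivalently on $[-1/2,1/2)$). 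Applying Lemma~\ref{lemma_random_walk_r} with $N=|Q|\geqslant n-f$ then yields
\begin{equation*}
Pr\!\left(\mathtt{r}(z(t_0-\tau))\geqslant\sqrt{\alpha(n-f)}\right)\geqslant e^{-\alpha},
\end{equation*}
using the paper's convention of treating the approximation in Lemma~\ref{lemma_random_walk_r} as an equality and the monotonicity $\sqrt{\alpha|Q|}\geqslant\sqrt{\alpha(n-f)}$.

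Next I would invoke Lemma~\ref{lemma_basic_admf} with anchor $t_0-\tau$ and duration $\tau$, which gives $\mathtt{r}(e^{2\pi j\tau}z(t_0)-z(t_0-\tau))\leqslant|Q|\varepsilon_\tau\leqslant n\varepsilon_\tau$. Since rotation by $e^{2\pi j\tau}$ preserves modulus, the reverse triangle inequality yields $|\mathtt{r}(z(t_0))-\mathtt{r}(z(t_0-\tau))|\leqslant n\varepsilon_\tau$, so
\begin{equation*}
\mathtt{r}(z(t_0))\geqslant \mathtt{r}(z(t_0-\tau))-n\varepsilon_\tau.
\end{equation*}
On the event of probability at least $e^{-\alpha}$ from the previous paragraph, this gives $\mathtt{r}(z(t_0))\geqslant\sqrt{\alpha(n-f)}-n\varepsilon_\tau=r_{\alpha,\tau}$, which is the desired bound.

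The main obstacle is the first step: justifying that the phases $(p_{t_0-\tau}^{(q)}-(t_0-\tau))\bmod 1$ really are i.i.d.\ uniform so that Lemma~\ref{lemma_random_walk_r} applies cleanly. The paper's hypothesis that each $q$ issues its first pulse inside $[t_0-\tau,t_0]$ \emph{independently} with $x_q\sim U(-1/2,1/2)$ is what lets this go through, and one should be careful that $p_{t_0-\tau}^{(q)}$ is indeed that first pulse (not a later one) for every $q\in Q$, so that the uniform randomness carries over. Once this is pinned down, the rest reduces to a direct application of the two earlier lemmas and a triangle inequality, and no extra computation is needed.
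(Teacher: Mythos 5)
Your proof is correct and follows essentially the same two-step route as the paper: apply Lemma~\ref{lemma_random_walk_r} at $t_0-\tau$ with $|Q|\geqslant n-f$, then carry the strength forward to $t_0$ via Lemma~\ref{lemma_basic_admf} and a (reverse) triangle inequality. If anything, your handling of the rotation factor $e^{2\pi j\tau}$ before invoking the reverse triangle inequality is slightly more careful than the paper's own write-up.
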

\begin{proof}
As $n-f\leqslant |Q|\leqslant n$, with Lemma~\ref{lemma_random_walk_r}, we have $Pr(\mathtt{r}(z(t_0-\tau))\geqslant \sqrt{\alpha(n-f)})\geqslant e^{-\alpha}$.
And with Lemma~\ref{lemma_basic_admf}, we have $|\mathtt{r}(z(t_0))-\mathtt{r}(z(t_0-\tau))|\leqslant |\mathtt{r}(z(t_0)-z(t_0-\tau))|\leqslant n\varepsilon_\tau$ and thus the conclusion holds.
\end{proof}

So, in this case, there can be a sufficiently large probability that the initial DMF would have sufficient strength.
With this, if the DMF can be well-measured in the nonfaulty nodes, initial synchronization can be directly reached in constant time (with respect to the pulsing cycles) with a sufficiently large probability in the presence of a number of fail-arbitrarily nodes (without over-frequent pulses).
For utilizing the \emph{one-kick} DMF in synchronization, denoting
\begin{equation}
\label{eq_epsilon_0}
\epsilon_0=(1+\rho)n(\varepsilon_{\omega/(1-\rho)+d}+d)
\end{equation}
the following result can be directly derived with the result of \cite{Rayleigh1880}.

\begin{lemma}
\label{lemma_one_kick_measure_arg}
If $\beta f\leqslant r_{\alpha,\tau}-\epsilon_0$ with some $\beta>3$, then there is at least a probability $e^{-\alpha}$ that for all $\tau\geqslant\tau_2\geqslant\tau_1\geqslant 0$ and all $q_1,q_2\in Q$
\begin{equation}
\label{eq_sync_0}
\sin 2\pi \mathring{d}(\psi(\hat{z}_{\omega,V}^{(q_1)}(t_0+\tau_1))-\psi(\hat{z}_{\omega,V}^{(q_2)}(t_0+\tau_2)),\tau_2-\tau_1)\leqslant \epsilon_1
\end{equation}
holds with $\epsilon_1=2(\epsilon_0/(f(\beta-1))+n\varepsilon_{\tau_2-\tau_1+\omega/(1-\rho)}$.
\end{lemma}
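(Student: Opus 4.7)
The plan is to condition on the high-probability event supplied by Lemma~\ref{lemma_one_kick_r}, decompose each measured DMF as a known phase-rotation of $z(t_0+\tau_k)$ plus a controllably small error, and then conclude via the geometric sin-bound reasoning already used inside Lemma~\ref{lemma_basic_dmf}. By Lemma~\ref{lemma_one_kick_r}, with probability at least $e^{-\alpha}$ we have $\mathtt{r}(z(t_0))\geqslant r_{\alpha,\tau}$, and all subsequent reasoning is carried out on this event; Lemma~\ref{lemma_basic_admf} then transfers the strength bound to each $z(t_0+\tau_k)$ with magnitude loss at most $n\varepsilon_\tau$, which under the hypothesis $\beta f\leqslant r_{\alpha,\tau}-\epsilon_0$ keeps the reference strength comfortably above $f$.

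Next I would establish a decomposition $\hat{z}_{\omega,V}^{(q_k)}(t_0+\tau_k)=e^{2\pi j(\omega-\tau_k)}z(t_0+\tau_k)+E_k$ and bound $|E_k|$ inside $\epsilon_0$, with the $f$ faulty contributions absorbed into the over-count of $|Q|$ by $n$ already present in the definition of $\epsilon_0$. For each nonfaulty sender $i\in Q$, the gap between the ideal summand $e^{2\pi j(p_{t_0+\tau_k}^{(i)}-t_0-\tau_k)}$ and the actually measured summand $e^{2\pi j((c_{\omega,i}^{(q_k)}-c_{q_k}(t_0+\tau_k))/T+\omega)}$ splits into three pieces: the communication delay, bounded by $d$ in reference time; the drift of $i$'s hardware clock between the first post-$t_0+\tau_k$ pulse and the in-window pulse actually selected, bounded via Lemma~\ref{lemma_basic_admf}'s per-node argument over a span of at most $\omega/(1-\rho)+d$; and the drift of the measuring clock $c_{q_k}$ inside its $\omega$-window, captured by the outer $(1+\rho)$ factor in $\epsilon_0$. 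Summing over $i$ and unifying with the faulty contributions gives the claimed bound on $|E_k|$.

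Finally I would apply Lemma~\ref{lemma_basic_admf} a second time to obtain $e^{2\pi j(\tau_2-\tau_1)}z(t_0+\tau_2)=z(t_0+\tau_1)+E_{12}$ with $|E_{12}|\leqslant n\varepsilon_{\tau_2-\tau_1+\omega/(1-\rho)}$, where the $\omega/(1-\rho)$ slack accounts for the span over which the in-window pulses selected by each measurer can differ. Combining with the previous step, the two measured DMFs are, up to a cumulative error of order $\epsilon_0+n\varepsilon_{\tau_2-\tau_1+\omega/(1-\rho)}$, rotations of the common phasor $z(t_0+\tau_1)$ through $\omega-\tau_1$ and $\omega-\tau_2$; hence their phase-angle difference approximates $\tau_2-\tau_1$ modulo $1$. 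Applying the inequality $\sin 2\pi\mathring{d}(\psi(a),\psi(b))\leqslant|a-b|/\mathtt{r}(c)$ used in the proof of Lemma~\ref{lemma_basic_dmf}, with the hypothesis $\beta f\leqslant r_{\alpha,\tau}-\epsilon_0$ pinning the denominator at at least $(\beta-1)f$, then yields the stated $\epsilon_1$. The principal obstacle is the meticulous bookkeeping in the second step: isolating sender drift, measurer drift, and communication delay while arranging the faulty-node contributions so that the final constants match exactly those in $\epsilon_0$ and $\epsilon_1$; after that the remainder is a routine triangle-inequality argument.
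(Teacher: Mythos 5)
Your overall route is essentially the paper's: condition on the event of Lemma~\ref{lemma_one_kick_r} (probability at least $e^{-\alpha}$ that $\mathtt{r}(z(t_0))\geqslant r_{\alpha,\tau}$), bound each node's measurement $\hat{z}_{\omega,V}^{(q)}$ against a reference DMF through the delay/drift bookkeeping behind Lemma~\ref{lemma_basic_admf} (the $\epsilon_0$ term), relate the two nodes' references across the span $\tau_2-\tau_1$ plus the window offset (the $n\varepsilon_{\tau_2-\tau_1+\omega/(1-\rho)}$ term), and close with the sine-versus-displacement estimate from Lemma~\ref{lemma_basic_dmf}, with the hypothesis $\beta f\leqslant r_{\alpha,\tau}-\epsilon_0$ supplying the denominator.

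There is, however, one step that fails as stated: you claim the faulty nodes' contribution to the measurement error can be ``absorbed into the over-count of $|Q|$ by $n$'' inside $\epsilon_0$. That slack amounts to only $f\cdot(1+\rho)(\varepsilon_{\omega/(1-\rho)+d}+d)$, a per-node quantity that is far below $1$ when $\rho$ and $d$ are small, whereas each faulty node contributes to $\hat{z}_{\omega,V}^{(q)}$ a full unit-modulus summand at an adversarially chosen phase (and contributes nothing to $z$, which sums only over $Q$); hence the faulty error is up to $f$ in magnitude and cannot be hidden inside $\epsilon_0$. The paper keeps it as a separate additive term, $\mathtt{r}(\hat{z}_{\omega,V}^{(q)}(t))\geqslant \mathtt{r}(z(t_q))-\epsilon_0-f$, and then $\beta f\leqslant r_{\alpha,\tau}-\epsilon_0$ yields the lower bound $(\beta-1)f$ on the strength --- which is precisely why the factor is $\beta-1$ rather than $\beta$. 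Your final step does use the denominator $(\beta-1)f$, but that is inconsistent with your own absorption claim (if the faults really sat inside $\epsilon_0$, the hypothesis would give $\beta f$); once the faults are pulled out of $\epsilon_0$, they must be accounted for explicitly, as in the paper, so that the constants land on the stated $\epsilon_1$. This is a local repair rather than a change of approach, but as written the handling of the faulty nodes --- the fault-tolerance content of the lemma --- does not go through.
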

\begin{proof}
Firstly, with the proof of Lemma~\ref{lemma_basic_admf}, as the communication delays and clock drifts are all bounded in nonfaulty nodes, if there is no faulty node, for all $q\in Q$ we would have $\mathtt{r}(\hat{z}_{\omega,V}^{(q)}(t)-z(t_q))\leqslant  \epsilon_0$ and $\sin 2\pi  \mathring{d}(\psi(\hat{z}_{\omega,V}^{(q)}(t)),\psi(z(t_q))\leqslant \epsilon_0/\mathtt{r}(\hat{z}_{\omega,V}^{(q)}(t))$ with $t_q$ being the start instant of the latest $\omega$-window in $q$, i.e., $(c_q(t)-c_q(t_q))\bmod c_{max}=\omega$ with $t$ being the current instant.
Thus, with up to $f$ faulty nodes, we have $\mathtt{r}(\hat{z}_{\omega,V}^{(q)}(t))\geqslant \mathtt{r}(z(t_q))-\epsilon_0-f$.
By applying Lemma~\ref{lemma_one_kick_r}, there is at least a probability $e^{-\alpha}$ that $\mathtt{r}(\hat{z}_{\omega,V}^{(q)}(t))\geqslant r_{\alpha,t_{q}-t_0}-\epsilon_0-f$ holds.
In this case, as $\beta f\leqslant r_{\alpha,t_{q}-t_0}-\epsilon_0$, we have $\sin 2\pi  \mathring{d}(\psi(\hat{z}_{\omega,V}^{(q)}(t)),\psi(z(t_q))\leqslant \epsilon_0/(f(\beta-1))$.
Thus, with Lemma~\ref{lemma_basic_admf}, for all $q_1,q_2\in Q$ we have (\ref{eq_sync_0}).
\end{proof}

Then, by measuring the basic \emph{one-kick} DMF in every $q\in Q$, $q$ can adjust its synchronization phase $\Phi_q(t)$ with $\psi(\hat{z}_{\omega,V}^{(q)}(t))$ in tracking the phase of the measured DMF.
With Lemma~\ref{lemma_one_kick_measure_arg}, as the basic \emph{one-kick} DMF can be approximately measured in all nonfaulty nodes, the synchronization phases of the nonfaulty nodes can be directly synchronized with approximately the normalized precision $(\arcsin \epsilon_1)/(2\pi)$, providing that the initial DMF has sufficient strength with $f=O(\sqrt{n})$.
Nevertheless, as the drifts of the hardware clocks are controlled by the malicious adversary, the strength of the DMF can tend to $0$ in the long progress of $t$.
To prevent this, every node $q\in Q$ can generate some new DMF and periodically adjust the synchronization phases in the long progress of $t$.
However, some undesired initial states (after the generation of the first pulses) of the system can still prevent the nodes in $Q$ from reaching global synchronization.
The rest of the paper discusses how to handle this problem with anonymous pulses.

\subsection{Anonymous Random-Walk Synchronization}
Firstly, with anonymous pulses, the former measurement of DMF would not do well since the nodes may receive different numbers of pulses in their observing windows.
To compensate for the loss of the uniqueness of the authenticated pulses, instead of recording just one randomly generated pulse for each node, every node $q\in Q$ would record much more anonymous pulses for every node.

Concretely, the (anonymous) \emph{random-walk} DMF can be generated as follows.
Firstly, every node $q\in Q$ would generate its first pulse with $x_q\sim U(-1/2,1/2)$ during the startup of $q$ just as before.
Then, at every pulsing instant $t$, $q$ would decide the next $\kappa_q$ in (\ref{eq_sch_p}) with another independently sampled $x_q\sim U(-1/2,1/2)$.
Meanwhile, by recording the instants of the received pulses in $R_\omega^{(q)}$, the DMF is measured with (\ref{eq_measuring_dmf_simp}).
Denoting
\begin{equation}
\label{eq_r3}
r_{\alpha,\tau,\omega}=\sqrt{\alpha(n-f)\lfloor 2(\omega/(1+\rho)-\tau)(1-\rho)/3\rfloor}
\end{equation}
the \emph{random-walk} DMF has the following property.

\begin{lemma}
\label{lemma_long_walk_basic}
If $\min\{t_{\mathtt{c}}-t_{\mathtt{a}},t_{\mathtt{d}}-t_{\mathtt{b}}\}\geqslant\omega/(1+\rho)$ and $\max\{t_{\mathtt{b}}-t_{\mathtt{a}},t_{\mathtt{d}}-t_{\mathtt{c}}\}\leqslant\tau$ hold with $\omega\gg \tau$, then with at least a probability $e^{-\alpha}$ that
\begin{equation}
\label{eq_gamma_0}
\gamma(t_{\mathtt{a}},t_{\mathtt{b}},t_{\mathtt{c}},t_{\mathtt{d}})\leqslant\frac{4n(1+\rho)\tau+2n}{r_{\alpha,\tau,\omega}}
\end{equation}
\end{lemma}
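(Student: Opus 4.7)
The plan is to control $\gamma$ by bounding its numerator deterministically and its denominator probabilistically. For the numerator, note that $\mathtt{r}(z(t',t''))$ is at most the number of nonfaulty pulses in $[t',t'']$, since $z$ is a sum of complex unit vectors and the triangle inequality applies. The pulsing-frequency restriction permits at most $2(1+\rho)\tau+1$ pulses per nonfaulty node within any duration of length at most $\tau$ (with the $+1$ accounting for a boundary pulse). Summing over $|Q|\leqslant n$ nonfaulty nodes and over the two intervals $[t_{\mathtt{a}},t_{\mathtt{b}}]$ and $[t_{\mathtt{c}},t_{\mathtt{d}}]$ yields the deterministic bound $\mathtt{r}(z(t_{\mathtt{a}},t_{\mathtt{b}}))+\mathtt{r}(z(t_{\mathtt{c}},t_{\mathtt{d}}))\leqslant 4n(1+\rho)\tau+2n$, which already supplies the numerator of (\ref{eq_gamma_0}).

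For the denominator, I would first count the minimum number of nonfaulty pulses landing inside $[t_{\mathtt{b}},t_{\mathtt{c}}]$, whose length is at least $(t_{\mathtt{c}}-t_{\mathtt{a}})-(t_{\mathtt{b}}-t_{\mathtt{a}})\geqslant \omega/(1+\rho)-\tau$. Since $x_q(t)\in[-1/2,1/2]$ and the hardware rate is at least $1-\rho$, each nonfaulty pulsing cycle lasts at most $3/(2(1-\rho))$ reference time units, so every nonfaulty node emits at least $\lfloor 2(\omega/(1+\rho)-\tau)(1-\rho)/3\rfloor$ pulses inside $[t_{\mathtt{b}},t_{\mathtt{c}}]$. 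Aggregated over $|Q|\geqslant n-f$ nodes, the total count of nonfaulty pulses in $[t_{\mathtt{b}},t_{\mathtt{c}}]$ is at least $N_{\min}=(n-f)\lfloor 2(\omega/(1+\rho)-\tau)(1-\rho)/3\rfloor$, so that $r_{\alpha,\tau,\omega}=\sqrt{\alpha N_{\min}}$. To invoke Lemma~\ref{lemma_random_walk_r}, I would then argue that each pulse's phase $(t'-t_{\mathtt{b}})\bmod 1$ is nearly uniform on $[0,1)$ and conditionally independent across pulses: every inter-pulse gap equals $(1+x_q)/h_q$ with $x_q\sim U(-1/2,1/2)$ freshly drawn at each pulse and $h_q\in[1-\rho,1+\rho]$ adversarial, and the range of this gap has length $1/h_q$, essentially one full period modulo $1$, so convolving successive gaps makes the mod-$1$ phase near-uniform. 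Since the tail $e^{-r^2/N}$ in Lemma~\ref{lemma_random_walk_r} is monotone increasing in $N$, taking $N=N_{\min}$ and $r=r_{\alpha,\tau,\omega}$ gives $Pr(\mathtt{r}(z(t_{\mathtt{b}},t_{\mathtt{c}}))\geqslant r_{\alpha,\tau,\omega})\geqslant e^{-\alpha}$.

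Combining the deterministic numerator bound with the probabilistic denominator bound yields (\ref{eq_gamma_0}) on this high-probability event. The main obstacle I expect is the rigorous treatment of the adversarial hardware drift $h_q$: the mod-$1$ distribution of a single pulse gap is uniform only up to an error of order $\rho$, and the adversary may correlate the $h_q$ choices across nodes and with the entire observed $x_q$ history, so the pulse-phase family is only approximately, rather than exactly, a collection of i.i.d.\ uniform variates. The hypothesis $\omega\gg\tau$ is essential here, since it forces $N_{\min}$ to be large, and so many independent $x_q$-draws accumulate into the random-walk sum that the small per-pulse deviation from uniformity is absorbed into the asymptotic approximation underlying Lemma~\ref{lemma_random_walk_r}.
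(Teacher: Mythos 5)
Your proposal is correct and follows essentially the same route as the paper's proof: a deterministic count bound of $2n(1+\rho)\tau+n$ on each of the outer intervals for the numerator of $\gamma$, and an application of Lemma~\ref{lemma_random_walk_r} to the at least $(n-f)\lfloor 2(\omega/(1+\rho)-\tau)(1-\rho)/3\rfloor$ nonfaulty pulses in $[t_{\mathtt{b}},t_{\mathtt{c}}]$ for the denominator. Your additional discussion of the near-uniformity of the mod-$1$ phases under adversarial drift only elaborates a point the paper leaves implicit in its idealized use of Lemma~\ref{lemma_random_walk_r}; the argument skeleton is identical.
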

\begin{proof}
With Lemma~\ref{lemma_random_walk_r}, as every node $q\in Q$ would at least generate $\lfloor 2(\omega/(1+\rho)-\tau)(1-\rho)/3\rfloor$ pulses during $[t_{\mathtt{b}},t_{\mathtt{c}}]$, $Pr(\mathtt{r}(z(t_{\mathtt{b}},t_{\mathtt{c}}))\geqslant r_{\alpha,\tau,\omega})\geqslant e^{-\alpha}$ holds.
As every node in $V$ cannot generate more than $2(1+\rho)\tau+1$ pulses in any $\tau$ duration, we have $\max\{\mathtt{r}(z(t_{\mathtt{a}},t_{\mathtt{b}}),\mathtt{r}(z(t_{\mathtt{c}},t_{\mathtt{d}}))\}\leqslant 2n(1+\rho)\tau+n$.
Thus, with (\ref{eq_gamma}), there is a probability $e^{-\alpha}$ that (\ref{eq_gamma_0}) holds.
\end{proof}

Denoting $\tau_0=3(1-\rho)^{-1}/2$, every node $q\in Q$ would at least generate one pulse in any $\tau_0$ duration.
For simplicity, assume that $q$ would measure the latest \emph{random-walk} DMF at every pulsing instant and use the phase-angle of the measured DMF to adjust its synchronization phase.
Then, for any time interval $[t_0,t_0+\tau_0]$, all nonfaulty nodes would adjust their synchronization phases with the measured \emph{random-walk} DMF at least once.
For measuring the \emph{random-walk} DMF, denoting
\begin{equation}
\label{eq_epsilonrsigma}
\epsilon_{\omega,\sigma}=(2(1+\rho)\omega(\epsilon_0+ f))^{\sigma}+2n(1+\rho)\tau_0+2\omega n d
\end{equation}
where $\sigma\in [0,1]$ is called the malignity index, the following result can be derived like that of the \emph{one-kick} one, providing that $\omega$ is sufficiently large and $\rho$ and $d$ are sufficiently small.
\begin{lemma}
\label{lemma_long_walk_measure_arg}
If all nonfaulty nodes have recorded the received pulses for at least $\omega T$ ticks at $t_0$ and $r_{\alpha,\tau_0,\omega}>\epsilon_{\omega,\sigma}$, then there is at least a probability $e^{-\alpha}$ that for all $t_1,t_2\in[t_0,t_0+\tau_0]$ ($t_1\leqslant t_2$) and all $q_1,q_2\in Q$
\begin{equation}
\label{eq_sync_1}
\sin 2\pi \mathring{d}(\psi(\hat{z}_\omega^{(q_1)}(t_1))-\psi(\hat{z}_\omega^{(q_2)}(t_2)),t_2-t_1)\leqslant \epsilon_2
\end{equation}
holds with $\epsilon_2= (2\epsilon_{\omega,\sigma}+2\rho(1-\rho)^{-1}\omega)/(r_{\alpha,\tau_0,\omega}-\epsilon_{\omega,\sigma})$.
\end{lemma}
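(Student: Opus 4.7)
The proof plan mirrors that of Lemma~\ref{lemma_one_kick_measure_arg}, replacing the one-kick strength bound by Lemma~\ref{lemma_long_walk_basic} and accounting for the new hazards introduced by anonymous pulses. At a high level, I would (i) relate each measured $\hat{z}_\omega^{(q)}(t)$ to the corresponding ``true'' anonymous DMF $z(t',t)$ over $q$'s reference-time observing window $[t',t]$, up to an additive error of magnitude at most $\epsilon_{\omega,\sigma}$; (ii) apply Lemma~\ref{lemma_long_walk_basic} to lower-bound the strength of the overlap of two such windows; and (iii) invoke Lemma~\ref{lemma_basic_dmf} to convert the magnitude bounds into the desired phase bound.

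For stage~(i), I would unfold the definition~(\ref{eq_measuring_dmf_simp}) and decompose the error into the three contributions summed in $\epsilon_{\omega,\sigma}$: faulty-node pulses contributing at most $(2(1+\rho)\omega(\epsilon_0+f))^{\sigma}$ in magnitude, with the malignity index $\sigma$ interpolating how much the adversary can concentrate up to $\approx 2\omega(1+\rho)f$ anonymous faulty pulses on a single phase direction; per-pulse timestamp jitter from the bounded delay $d$, contributing at most $2\omega nd$ across the at most $2\omega n(1+\rho)$ nonfaulty pulses in the window; and a ``slack'' of $2n(1+\rho)\tau_0$ from the fact that the actual reference-time window start $t'$ may disagree with a nominal start by up to $\tau_0$. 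Separately, clock drift over the $\omega$-window introduces a further error of at most $2\rho(1-\rho)^{-1}\omega$ in the rotation factor $e^{2\pi j(c-c_q(t))/T}$; this term will be carried in the numerator of $\epsilon_2$ rather than folded into $\epsilon_{\omega,\sigma}$.

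For stages~(ii)--(iii), fix $q_1,q_2\in Q$ and $t_1\leqslant t_2$ in $[t_0,t_0+\tau_0]$, let $t_1',t_2'$ denote the corresponding reference-time window starts, and set $t_{\mathtt{a}}=t_1'$, $t_{\mathtt{b}}=t_2'$, $t_{\mathtt{c}}=t_1$, $t_{\mathtt{d}}=t_2$. Since $\omega\gg\tau_0$ and each window has reference-time width in $[\omega/(1+\rho),\omega/(1-\rho)]$, the hypotheses $\min\{t_{\mathtt{c}}-t_{\mathtt{a}},t_{\mathtt{d}}-t_{\mathtt{b}}\}\geqslant\omega/(1+\rho)$ and $\max\{t_{\mathtt{b}}-t_{\mathtt{a}},t_{\mathtt{d}}-t_{\mathtt{c}}\}\leqslant\tau_0$ hold, so Lemma~\ref{lemma_long_walk_basic} delivers $\mathtt{r}(z(t_{\mathtt{b}},t_{\mathtt{c}}))\geqslant r_{\alpha,\tau_0,\omega}$ with probability at least $e^{-\alpha}$. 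Combined with stage~(i), this bounds the measured strengths from below by $r_{\alpha,\tau_0,\omega}-\epsilon_{\omega,\sigma}$, and Lemma~\ref{lemma_basic_dmf} applied to $z(t_{\mathtt{a}},t_{\mathtt{c}})$ and $z(t_{\mathtt{b}},t_{\mathtt{d}})$ yields the required phase comparison; propagating the stage~(i) measurement errors (adding $2\epsilon_{\omega,\sigma}$ plus the drift term $2\rho(1-\rho)^{-1}\omega$ to the numerator, and subtracting $\epsilon_{\omega,\sigma}$ from the denominator to match the actually measured strengths) gives exactly the stated $\epsilon_2$.

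The main obstacle is the anonymity. Unlike Lemma~\ref{lemma_one_kick_measure_arg}, where faulty contributions cap at $f$ because only one pulse per node is counted, here up to $2\omega(1+\rho)f$ anonymous faulty pulses can land inside a single window, and the adversary can try to concentrate them all at one phase angle. The malignity index $\sigma$ is the quantitative knob interpolating between benign ($\sigma=0$) and fully Byzantine ($\sigma=1$) regimes, and verifying that no adversarial strategy can exceed the $(2(1+\rho)\omega(\epsilon_0+f))^{\sigma}$ bound is the technical crux. A secondary subtlety is that the two $\omega$-windows are not reference-time-aligned, so the phase references for $\hat{z}_\omega^{(q_1)}(t_1)$ and $\hat{z}_\omega^{(q_2)}(t_2)$ are offset by approximately $t_2-t_1$ in reference time; this is precisely the offset appearing inside $\mathring{d}(\cdot,t_2-t_1)$ in~(\ref{eq_sync_1}), and keeping the rotation bookkeeping consistent with the convention implicit in Lemma~\ref{lemma_basic_dmf} requires care.
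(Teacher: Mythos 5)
Your plan follows the paper's own route: apply Lemma~\ref{lemma_long_walk_basic} to the overlap of the two observing windows to obtain the strength lower bound with probability $e^{-\alpha}$, fold the non-overlapping window segments, faulty pulses, delays and window misalignment into $\epsilon_{\omega,\sigma}$ (carrying the drift term $2\rho(1-\rho)^{-1}\omega$ separately in the numerator), and convert strength bounds into the phase bound via the Lemma~\ref{lemma_basic_dmf} mechanism---which is exactly what the paper does, only phrased more informally there by ``regarding'' the non-overlap parts $z(t_1-\omega_{q_1},t_2-\omega_{q_2})$ and $z(t_1,t_2)$ as adversarial pulses. One minor remark: the $(2(1+\rho)\omega(\epsilon_0+f))^{\sigma}$ term you single out as the ``technical crux'' is not verified against adversarial strategies in the paper at all; the malignity index $\sigma$ is part of the fault model (it parameterizes the adversary's permitted strength), so that bound is assumed rather than proved.
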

\begin{proof}
Without loss of generality, assume $q_1$ and $q_2$ as respectively the first and the last nonfaulty nodes that measure the DMF during $[t_0,t_0+\tau_0]$,
With Lemma~\ref{lemma_long_walk_basic}, there is at least a probability $e^{-\alpha}$ $\gamma(t_1-\omega_{q_1},t_2-\omega_{q_2},t_1,t_2)\leqslant\frac{4n(1+\rho)\tau_0+2n}{r_{\alpha,\tau_0,\omega}}$ holds with $\omega_q$ being the actual duration of the current $\omega$-window in $q$.
In this case, similar to Lemma~\ref{lemma_one_kick_measure_arg}, denoting $z(t_2-\omega_{q_2},t_1)=z_0$, as a node can at most generate $2$ pulses in any $T$ ticks, $z_0$ would be measured as some $\hat{z}_q$ in every $q\in Q$ satisfying $\mathtt{r}(\hat{z}_q)\geqslant \mathtt{r}(z_0)-2(1+\rho)\omega(\epsilon_0+ f)$ and $\sin 2\pi \mathring{d}(\psi(\hat{z}_q),\psi(z_0)+\tau_q)\leqslant 2(1+\rho)\omega(\epsilon_0+ f)/\mathtt{r}(\hat{z}_q)$ with $\tau_q=(t_2-\omega_{q_2})-(t_q-\omega_q)$ and $t_q\in[t_0,t_0+\tau_0]$ being the measuring instant in $q$.
Then, just regarding $z(t_1-\omega_{q_1},t_2-\omega_{q_2})$ and $z(t_1,t_2)$ as adversarial pulses, we have $\sin 2\pi \mathring{d}(\psi(\hat{z}_\omega^{(q)}(t_q)),\psi(z_0)+\tau_q)\leqslant \frac{\epsilon_{\omega,\sigma}}{r_{\alpha,\tau_0,\omega}-\epsilon_{\omega,\sigma}}$ for all $q\in Q$.
So with at least a probability $e^{-\alpha}$ (\ref{eq_sync_1}) holds.
\end{proof}

Thus, with sufficiently large $\omega$ and sufficiently small $\rho$ and $d$, the \emph{random-walk} DMF can serve as the global reference with a probability $e^{-\alpha}$.
However, with this, the running system would randomly fail with high probabilities.
To avoid this, we further explore walks with some feedback loops.

\subsection{Anonymous Walks with the Basic Feedback Loop}
In performing anonymous walks with a simple feedback loop, every node in $Q$ can observe the strength of the current anonymous DMF and perform different operations with this strength being in different ranges.
For the simplest example, denoting the anonymous DMF measured in $q\in Q$ with the $\omega$-window as $\hat{z}_q(t)$ at instant $t$, if $\mathtt{r}(\hat{z}_q(t))<\mathtt{R_0}$ with some threshold $\mathtt{R_0}$, $q$ would perform the original random walk with $ x_q\sim U(-1/2,1/2)$ just as in generating the \emph{random-walk} DMF.
Otherwise, if $\mathtt{r}(\hat{z}_q(t))\geqslant\mathtt{R_0}$, $q$ would directly overwrite $\psi_q(t^+)$ as $\psi(\hat{z}_q(t))$.
On the whole, the DMF generated in this hybrid way is called the basic \emph{half-random-walk} DMF.

For the basic \emph{half-random-walk} DMF, by configuring a sufficiently large $\mathtt{R_0}$, the measurement errors of the DMF phase-angles would be bounded in a small range when the overwriting operation is performed (also holds for the mirror operation, the same below).
Meanwhile, with the basic result of Lemma~\ref{lemma_long_walk_measure_arg}, by configuring a sufficiently large $\omega$, the difference of the measured DMF in different nonfaulty nodes would also be bounded in a small range when the overwriting operation is performed in adjacent time.
So, denoting $t_k$ as the $k$th smallest pulsing instant (in the pulsing-instant set $P$ defined in Section~\ref{subsec:DMF:ADMF}), we can represent the core transition of the basic \emph{half-random-walk} DMF with the following discrete-time system equation
\begin{equation}
\label{eq_sys_eq}
\vec z(k+1)=\mathbf A(k)\vec z(k)+[1,0,\dots,0]'(\varepsilon(k)b(k)+u(k)(1-b(k)))
\end{equation}
where $\vec z(k)=[z(t_k),z(t_{k-1}),\dots,z(t_{k-N+1})]'$ is the normalized $N$-dimension DMF vector at $t_k$, $\varepsilon(k)$ is the normalized measurement error of the DMF in generating the $(k+1)$st pulse, $b(k)$ is a boolean value that determines if the overwriting operation would be performed, $u(k)$ is the unit (one step) random-walk input, and $\mathbf A(k)$ is the transition matrix in deriving the next DMF vector $\vec z(k+1)$ with the current one.
Concretely, the transition matrix $\mathbf A(k)$ can be expressed as
\begin{equation}
\label{eq_A_eq}
\mathbf A(k)=a(k)
\left[ \begin{matrix}
   {\vec 1'}  \\
   {\mathbf 0}  \\
\end{matrix} \right]+
\left[ \begin{matrix}
   {\vec 0'} & {0}  \\
   {\mathbf I} & {\vec 0}  \\
\end{matrix} \right]
\end{equation}
with $a(k)={b(k)}/{r(k)}$, $r(k)=\mathtt{r}(\sum_{i=k}^{k-N+1})$, and $b(k)$ being expressed as
\begin{eqnarray}
\label{eq_b_eq}
b(k)=\left\{
\begin{aligned}
1 &, & {r(k)\geqslant\mathtt{R_0}+\epsilon_\mathtt{max}}\\
0 &, & {r(k)<\mathtt{R_0}-\epsilon_\mathtt{max}}\\
0 \text{ or } 1&, & \text{for otherwise}
\end{aligned}
\right.
\end{eqnarray}
where $\epsilon_\mathtt{max}$ is the maximal measurement error of the DMF strength.
As we have $b(k)=0$ when $r(k)<\mathtt{R_0}-\epsilon_\mathtt{max}$, $r(k)$ can be safely represented as $\max\{\mathtt{R_0}-\epsilon_\mathtt{max},\mathtt{r}(\sum_{i=k}^{k-N+1}z(t_k))\}$ in computing $a(k)$.
Meanwhile, as the measured DMF can always be easily normalized to a unit complex number in every nonfaulty node, we can safely assume that $\varepsilon(k)$ would not change the strength of the measured normalized DMF $a(k)\sum_{i=k}^{k-N+1}z(t_k)$.
Thus, we have $a(k)\in\{0\}\cup[1/N,1/(\mathtt{R_0}-\epsilon_\mathtt{max})]$ with the bounded measurement error of the DMF strength.

In considering continuous-time transitions of the \emph{half-random-walk} DMF, although the numbers of the pluses being observed in the $\omega$-windows may be different, this kind of difference can be handled by approximate discretization with allowing an extra bounded measurement error of the DMF.
Meanwhile, in considering self-stabilization, although the initial records in the $\omega$-windows of the nonfaulty nodes can be arbitrarily configured by the malicious adversary, the records of pulses generated in the nonfaulty nodes would be from the real pulses generated in the nonfaulty nodes after any $\omega/(1-\rho)$ duration.
Since then, the records in the $\omega$-windows of all nonfaulty nodes at $t_{k+1}$ can be approximately represented as $\vec z(k)$ with bounded measurement errors.
Now with the sufficiently large $\mathtt{R_0}$ and $\omega$ and sufficiently small $\rho$ and $d$, we can count all these bounded errors into the overall measurement errors $\epsilon_\mathtt{max}$ and $\varepsilon(k)$.
So, the core self-stabilizing synchronization problem here is the stabilization of the simplified discrete-time system described in (\ref{eq_sys_eq}) with arbitrary initial DMF vector $\vec z(N)$.

To solve this problem, by denoting $\vec c(k)=[1,0,\dots,0]'(\varepsilon(k)b(k)+u(k)(1-b(k)))$ and $\mathbf A(k,i)=\mathbf A(k)\mathbf A(k-1)\cdots\mathbf A(i)$, for $k\geqslant 0$ we have
\begin{equation}
\label{eq_sys_express}
\vec z(k+N+1)=\mathbf A(k+N,N)\vec z(N)+\sum_{i=N+1}^{k+N}\mathbf A(k+N,i)\vec c(i-1)+\vec c(k)
\end{equation}
For filtering the measurement errors, as $a(k) \in\{0\}\cup[1/N,1/(\mathtt{R_0}-\epsilon_\mathtt{max})]$, the effect of $\varepsilon(k)$ on the system would be no worse than that with $a(k)\equiv 1/(\mathtt{R_0}-\epsilon_\mathtt{max})$.
With this, as the accumulated effect of the measurement errors in $z(t_{k+1})$ can be upper-bounded by $\sum_{i=0}^{N-1}(\mathtt{R_0}-\epsilon_\mathtt{max})^{-i}\varepsilon(k-i)$, we ignore the effect of $\varepsilon(k)$ in the core analysis.
With this and denoting $\bar b(k)=1-b(k)$ for convenience, we further simplify (\ref{eq_sys_eq}) as
\begin{equation}
\label{eq_sys_simp}
\vec z(k+1)=\mathbf A(k)\vec z(k)+[u(k)\bar b(k),0,\dots,0]'
\end{equation}

To gain an intuitive understanding of this, let us play a simple game of fixed-length curves moving on the plane.
Firstly, as all elements in $\vec z(k)$ are unit complex numbers, we draw a curve $\Gamma(0)$ on the plane with $z(t_i)$ ($i=1,2,\dots,N$) being the $i$th unit segment just like the planar random walks, where the first segment $z(t_1)$ and the last segment $z(t_N)$ are respectively referred to as the tail and head of $\Gamma(0)$.
Then, the curve $\Gamma(k)$ (with $k=0,1,2,\dots$) would move on the plane by iteratively growing a new head from the old one and destroying the old tail in maintaining its fixed length $N$.
In growing a new head from $\Gamma(k)$ to $\Gamma(k+1)$, the distance $R(k)=r(k+N)$ between the two endpoints of $\Gamma(k)$ (referred to as the tail point and the head point) would be referenced.
When $R(k)<\mathtt{R_0}-\epsilon_\mathtt{max}$, the new head would grow with a randomly generated unit complex number.
When $R(k)>\mathtt{R_0}+\epsilon_\mathtt{max}$, the new head would grow with the unit complex number parallel to the vector from the tail point to the head point.
Otherwise, when $\mathtt{R_0}-\epsilon_\mathtt{max}\leqslant R(k)\leqslant\mathtt{R_0}+\epsilon_\mathtt{max}$, the growing of the new head is controlled by the adversary in choosing one of the above two growing rules.
With this, the stabilization problem requires that the curve $\Gamma(k)$ should be an approximately straight line with a small $k$ from arbitrarily shaped $\Gamma(0)$.

For Claim~\ref{claim_result}, here we provide an informal analysis to sketch the essential stabilization property of this discrete-time system, which also indicates some close relationship between the so-called \emph{fixed-length curves moving} game and the proposed DMF-PRW synchronization systems.
Firstly, it is trivial to see that $\Gamma(k)$ would become an approximately straight line if $r(k)>\mathtt{R_0}+\epsilon_\mathtt{max}$.
Meanwhile, with Lemma~\ref{lemma_random_walk_r} and (\ref{eq_sys_simp}) it is easy to see $r(k)\geqslant\mathtt{R_0}-\epsilon_\mathtt{max}$ would be satisfied with a constant positive probability if $\forall k'\in\{k-N,\dots,k-1\}:r(k')<\mathtt{R_0}-\epsilon_\mathtt{max}$ and $\mathtt{R_0}=O(\sqrt{N})$ holds.
Also, if $b(k)\equiv 1$, with common eigenvalue analysis, $\Gamma(k)$ would converge into an approximately straight line with an exponential convergence rate.
So the main problem here is to show that the adversary cannot hold the strength of the DMF in $[0,\mathtt{R_0}+\epsilon_\mathtt{max}]$ with high probabilities by just manipulating $b(k)$ and $\Gamma(0)$.

In manipulating $\Gamma(0)$, the adversary might want to prevent the growth of the DMF strength by posing some convolutely shaped curve that changes its growing directions sharply and frequently.
However, with (\ref{eq_sys_express}) we see that the initial state $\vec z(N)$ would be iteratively transferred with the transition matrices $\mathbf A(k+N,N)$.
For every $\mathbf A(i)$ with $i\in\{N,\dots,k+N\}$, when $b(i)=1$, $\vec 1$ is an eigenvector corresponding to the maximal simple eigenvalue of $\mathbf A(i)$.
Meanwhile, in this case, as $a(k) \in [1/N,1/(\mathtt{R_0}-\epsilon_\mathtt{max})]$, the absolute values of all other eigenvalues of $\mathbf A(i)$ is below than $1$.
So, in this case, any $\vec z(N)$ would tend to be fast filtered into $c\vec 1$ with $c$ being some fixed complex number ($|c|\leqslant 1$) and thus the curve $\Gamma(k)$ would tend to be a line.
Otherwise, when $b(i)=0$, the contribution of $\vec z(N)$ to the new head of the curve is $0$ and thus $\Gamma(k)$ would still tend to be a line (or a dot with $c=0$ if $b(i)\equiv 0$) without changing its growing direction abruptly.
So the effect of $\Gamma(0)$ in changing the direction of the current moving curve would always be filtered out with at least an exponential convergence rate.
Thus, the system can be simplified as the following equation (with ignoring the trivial normalization factor) after a constant number of $\omega$-windows.
\begin{equation}
\label{eq_sys_express_1}
\vec z(k+N+1)=c\vec 1+\sum_{i=N+1}^{k+N}\mathbf A(k+N,i)\vec c(i-1)+\vec c(k)
\end{equation}

Meanwhile, the adversary can only manipulate $b(k)$ when $\mathtt{R_0}-\epsilon_\mathtt{max}\leqslant r(k)\leqslant\mathtt{R_0}+\epsilon_\mathtt{max}$ holds.
By replacing $\vec c(i)$ as $[u(i)\bar b(k),0,\dots,0]'$ in the right side of (\ref{eq_sys_express_1}), we see that the changes of the directions of the random walk inputs $u(i)$ also tend to be filtered out with $\mathbf A(k+N+1,i+1)$ with enough large DMF strength.
So, as the vector from the tail point to the head point of $\Gamma(k)$ is just the DMF $z(k+N)$, the expectation of the strength of $z(k+N)$ would not be lowered with the transition matrices $\mathbf A(k+N,N)$.
Concretely, with applying Lemma~\ref{lemma_random_walk_r}, there is at least a probability $\alpha$ that $\mathtt{r}(\sum_{i=k}^{k+N}u(i))\geqslant \sqrt{\alpha N}$ holds.
Now if the adversary can manipulate $b(k)$, we have $r(k)>\mathtt{R_0}-\epsilon_\mathtt{max}$.
By configuring $\mathtt{R_0}$ as $\beta_1\epsilon_\mathtt{max}$ with a sufficiently large $\beta_1$, there is a constant positive probability $\alpha_1$ that $\mathring{d}(\psi(z(t_k)),\psi(\sum_{i=k}^{k+N}u(i)))< \pi/4$ (or smaller if needed) holds, since the directions of the random walk inputs are uniformly distributed in all directions.
In this case, as the strength of $z(t_{i})$ would not decrease in performing the overwriting operation, the strength of $z(t_{k+N})$ would not be worse than that of random walks by manipulating $b(k)$, providing that the past states of the system (being destroyed as the tails of the moving curve) would not abruptly change the directions of $z(t_i)$ for more than $\pi/2$ with $i\in\{k+1,\dots,k+N\}$.
To show this, firstly, if $R(k)>\mathtt{R_0}-\epsilon_\mathtt{max}$ and the diameter (of the convex hull, the same below) of $\Gamma(k)$ is larger than $\mathtt{R_0}+\epsilon_\mathtt{max}+\sqrt{\alpha_0 N}$, then there is a constant probability $e^{-\alpha_0}$ that $R(k')>\mathtt{R_0}+\epsilon_\mathtt{max}$ with some $k'<k+N$.
So we only need to handle the case with the diameter of $\Gamma(k)$ being within $\mathtt{R_0}+\epsilon_\mathtt{max}+\sqrt{\alpha_0 N}$.
In this case, with a small $\alpha_0$, the upper-bound of the diameter of $\Gamma(k)$ is approximately $\mathtt{R_0}+\epsilon_\mathtt{max}$.
Similarly, the diameter of $\Gamma(k+N)$ can also be approximately upper-bounded by $\mathtt{R_0}+\epsilon_\mathtt{max}$ with the same consideration.
As $\mathring{d}(\psi(z(t_k)),\psi(\sum_{i=k}^{k+N}u(i)))< \pi/4$, although the convex hulls of $\Gamma(k)$ and $\Gamma(k+N)$ has a small intersection, the undesired overwriting operation can only happen with $\mathring{d}(\psi(z(t_i)),\psi(z(t_k)))>\pi/2$ and $\mathtt{r}(z(t_i))\geqslant \mathtt{R_0}-\epsilon_\mathtt{max}$.
Now with a sufficiently large $\mathtt{R_0}=\beta_1\epsilon_\mathtt{max}$, these two conditions cannot hold together.
So, the strength of $z(t_{k+N})$ would be larger than $\mathtt{R_0}+\epsilon_\mathtt{max}$ with a constant positive probability with $\mathtt{R_0}\leqslant \sqrt{\alpha N}$.

Then, the last problem here is to configure the related parameters.
On the one hand, we expect that the initial strength of the current DMF can reach $\mathtt{R_0}$ with a high probability.
For this, the threshold $\mathtt{R_0}$ should be configured sufficiently small.
However, on the other hand, in supporting a sufficiently strong initial DMF to tolerate faults and measurement errors, $\mathtt{R_0}$ is expected to be configured sufficiently large.
In reconciling this, the malignity index $\sigma\in[0,1]$ in (\ref{eq_epsilonrsigma}) can serve as a simple tradeoff.
Concretely, if $\sigma\in[0, 1/2]$, the faults in the system would be no worse than random walks, which can be referred to as the benign faults, as it means that the faulty nodes would at most generate an $O(\sqrt{\omega f})$-strength interference in $\epsilon_{\omega,\sigma}$.
In this case, $f=O(n)$ faulty nodes can be tolerated in the system with both satisfying $\mathtt{R_0}=\Theta(\epsilon_{\omega,\sigma})$ and $\mathtt{R_0}=\Theta(r_{\alpha,\tau_0,\omega})$.
For example, we can configure $\mathtt{R_0}=\sqrt{N}/2$ in both supporting an easy-reaching initial strength and sufficient resilience to faults and measurement errors.
If $\sigma\in(1/2,1)$, in satisfying $\epsilon_{\omega,\sigma}=\Theta((\omega f)^\sigma)=\Theta( r_{\alpha,\tau_0,\omega})=\Theta(\sqrt{n\omega})$, the system can at least tolerate $f=O(n^{(1-\sigma)/\sigma})$ faulty nodes with $\omega=O(n)$.
We can see that, by increasing $\sigma$, the strength of the faulty nodes grows stronger, and the system can only tolerate fewer such malign faults.
As $\sigma$ tends to $1$, the faulty nodes would tend to acquire the full strength of the Byzantine nodes without over-frequent pulses.
Then, when $\sigma=1$, more faulty nodes can be tolerated at the expense of a larger $\omega$.
However, a complete investigation of the bounds of the resilience and stabilization time is out of the range of this paper.
In further exploring these bounds, we note that the pulses in $z(t_1-\omega_{q_1},t_2-\omega_{q_2})$ and $z(t_1,t_2)$ are simply regarded as adversarial pulses in Lemma~\ref{lemma_long_walk_measure_arg}.
This can be better investigated by utilizing the pulses from nonfaulty nodes with the discussed discrete-time system.

\subsection{Integrating with other operations}

The basic \emph{half-random-walk} DMF can be further extended by integrating other optional operations.
For example, in making an intermediate gear between the original random walks and the deterministic overwriting operations, when $\mathtt{R_0}\leqslant\mathtt{r}(\hat{z}_q(t))<\mathtt{R_1}$ holds with some thresholds $\mathtt{R_0}$ and $\mathtt{R_1}$, $q$ can still first perform the original random walk with $ x_q\sim U(-1/2,1/2)$.
Meanwhile, $q$ would perform an additional mirror operation if the pulsing timer $\kappa_q$ scheduled in (\ref{eq_sch_p}) stays too far away from the phase angle of the measured DMF.
Concretely, denoting $\psi_q(t)=((\kappa_q(t)-c_q(t))\bmod c_{max})/T$, if the pulsing timer is scheduled as $\kappa_q(t^+)$ in the random walk and $\mathring{d}(\psi(\hat{z}_q(t)),\psi_q(t^+))>1/4$ holds with the current DMF being measured in $q$ as $\hat{z}_q(t)$ at the pulsing instant $t$, then $q$ would reschedule $\kappa_q$ as
\begin{equation}
\kappa_q(t^+)=(c_q(t^+)+\psi_q'(t^+)T)\bmod c_{max}
\end{equation}
with
\begin{equation}
\psi_q'(t^+)=(2\psi(\hat{z}_q(t))+\frac{1}{2}-\psi_q(t^+))\bmod 1
\end{equation}
Equivalently, this mirror operation can be viewed as adding a complex number $\hat{z}_q'=e^{2\pi j\psi_q'(t^+)}-e^{2\pi j\psi_q(t^+)}$ into the DMF when $q$ generates the pulse at the expiration of $\kappa_q$, as is shown in Fig.~\ref{fig:mirror}.
Meanwhile, the overwriting operation (would be performed when $\mathtt{r}(\hat{z}_q(t))\geqslant\mathtt{R_1}$) can be viewed as the composition of the original random walk operation, the mirror operation, and a revising operation which compensates the outcome of the mirror operation to the desired pulsing phase, as is shown in Fig.~\ref{fig:composition}.
Furthermore, the \emph{half-random-walk} DMF can be integrated with some variants of approximate agreement \citep{Dolev1986Approximate} on the unit circle (like the one employed in \cite{DolevS2004SSBFTClock}) when $\mathtt{r}(\hat{z}_q(t))\geqslant\mathtt{R_2}$ with some even larger threshold $\mathtt{R_2}$.
As is limited here, we mainly discuss the \emph{half-random-walk} DMF with the mirror operation.

\begin{figure}[htbp]
\centering
\subfloat[The mirror operation\label{fig:mirror}]{\centering\includegraphics[width=1.4in]{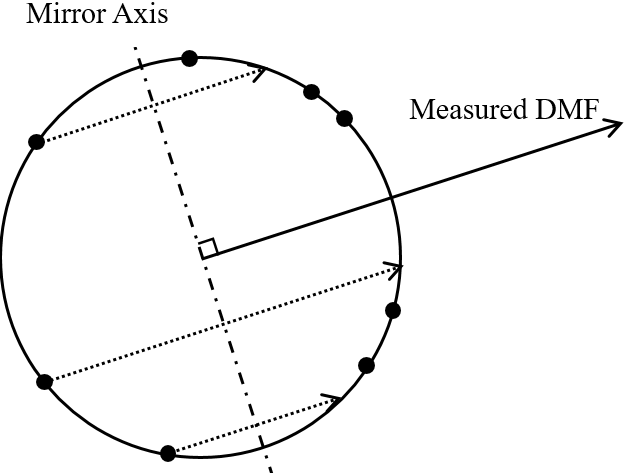}}
\subfloat[The composition\label{fig:composition}]{\centering\includegraphics[width=1.3in]{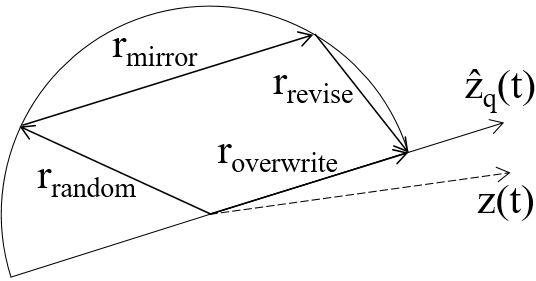}}
\subfloat[The mirror effect\label{fig:mirrorint}]{\centering\includegraphics[width=1.1in]{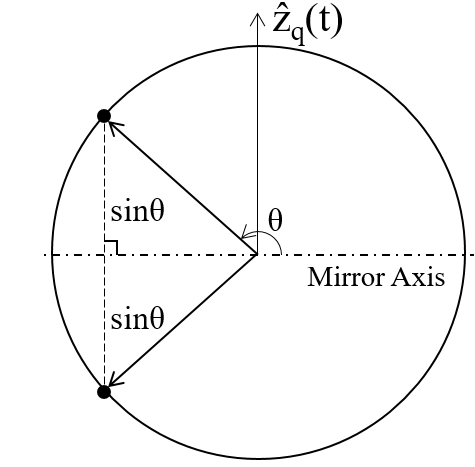}}
\caption{The extended half-random-walk DMF.}
\label{fig:carouselwalk}
\end{figure}

Intuitively, as the mirror operation is just adding an extra complex number being approximately parallel to the current DMF, it would generate much less effect to the original random walks in the orthogonal direction of the current DMF than that of the overwriting operation.
Meanwhile, if the strength of the measured DMF is always in $[\mathtt{R_0},\mathtt{R_1})$, as the samplings of the random variable $x_q$ are evenly distributed in $[-1/2,1/2]$, the expectation of the additional strength contributed by the mirror operation in a single step of the walk would be $\frac{1}{2}\int_{0}^{\pi}2\sin \theta\mathrm{d}\frac{\theta}{\pi}=\frac{2}{\pi}$, as is shown in Fig.~\ref{fig:mirrorint}.
With this, if the changes of the directions of the DMF are no larger than $\pi/2$ during the $N$-step walk, the strength of the DMF can grow to $\Theta(N)$ easily.

So, by configuring $\mathtt{R_1}=O(N)$ with some small linear coefficient, we expect that the system can reach a deterministic state with a high probability.
To this, we provide some simple numerical simulation results with randomly posed $\Gamma(0)$ without considering measurement errors and malign faults.
As the measurement errors are bounded, and the effect of $\Gamma(0)$ in changing the direction of the current moving curve would be fast filtered out, these results also make some sense in considering the general cases with arbitrarily posed $\Gamma(0)$.
Firstly, by configuring $\mathtt{R_0}=\sqrt{N}/2$ and $\mathtt{R_1}={N}/(2\pi)$ with $N$ being the number of the steps of the random walk observed in the $\omega$-window, the extended \emph{half-random-walk} DMF can reach the desired strength with very high probabilities.
As is shown in Fig.~\ref{fig:properR0R1}, by setting $N=100$ and running only $300$-step simulation (i.e., with $100$ records in every $\omega$-window and running every simulation for a $3\omega$ duration), fast stabilization is observed in all performed simulations (several $10000$ instances), where the DMF in all simulations are ended with the strength close to $N$.

\begin{figure}[htbp]
\centering
\subfloat[A typical walk\label{fig:goodR0R1}]{\centering\includegraphics[width=1.45in]{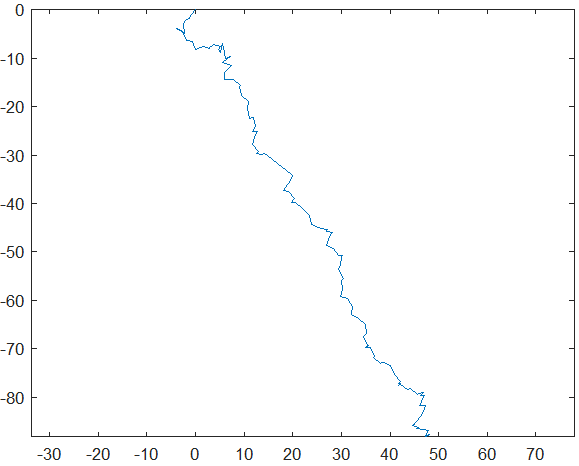}}
\subfloat[Distribution of strength\label{fig:goodR0R1sta}]{\centering\includegraphics[width=1.5in]{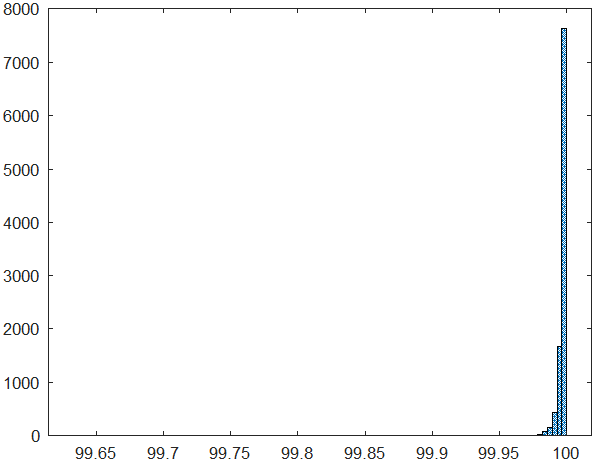}}
\caption{The walks with $\mathtt{R_0}=\sqrt{N}/2$ and $\mathtt{R_1}={N}/(2\pi)$.}
\label{fig:properR0R1}
\end{figure}

As a comparison, when the two parameters $\mathtt{R_0}$ and $\mathtt{R_1}$ are configured larger or smaller, the strength of the DMF might not always be fast converged to $N$ in $3$ $\omega$-windows.
For example, when $\mathtt{R_0}$ is configured as $2\sqrt{N}$ (the other settings are the same), the strength can be still in the $[0,\mathtt{R_0}]$ range after $3$ $\omega$-windows with a manifestable probability, as is shown in Fig.~\ref{fig:improperR0}.
For another example, when $\mathtt{R_1}$ is configured as $2{N}/\pi$, the walks show random behavior at the end of the simulations with a manifestable probability, as is shown in Fig.~\ref{fig:improperR1}.
Nevertheless, even with these larger and smaller parameters, the probabilities of system stabilization in $3$ $\omega$-windows are still very high.

\begin{figure}[htbp]
\centering
\subfloat[A typical walk\label{fig:largeR0}]{\centering\includegraphics[width=1.45in]{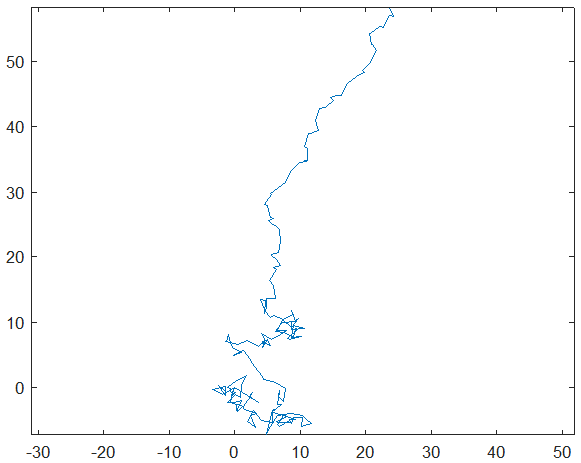}}
\subfloat[Distribution of strength\label{fig:largeR0sta}]{\centering\includegraphics[width=1.5in]{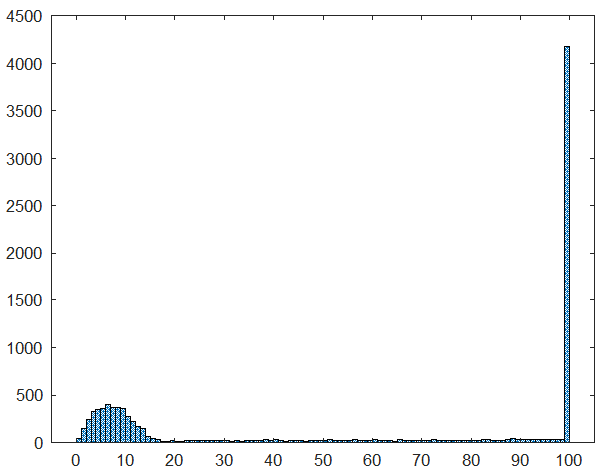}}
\caption{Walks with $\mathtt{R_0}=2\sqrt{N}$ and $\mathtt{R_1}={N}/(2\pi)$.}
\label{fig:improperR0}
\end{figure}

\begin{figure}[htbp]
\centering
\subfloat[A typical walk\label{fig:largeR1}]{\centering\includegraphics[width=1.5in]{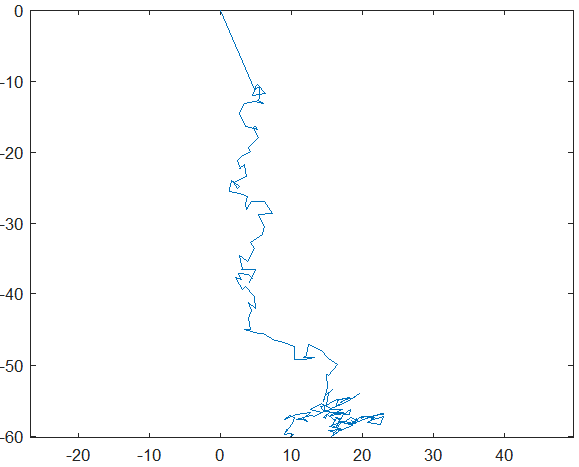}}
\subfloat[Distribution of strength\label{fig:largeR1sta}]{\centering\includegraphics[width=1.55in]{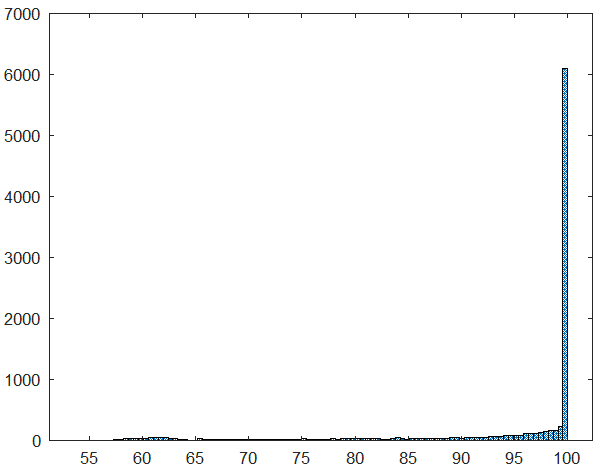}}
\caption{Walks with $\mathtt{R_0}=\sqrt{N}/2$ and $\mathtt{R_1}=2{N}/\pi$.}
\label{fig:improperR1}
\end{figure}

For the overall synchronization, with the strength of the DMF being more than $N/2$ with a very high probability after several $\omega$-windows, the precision $\Pi$ can be approximately $(\arcsin (2\epsilon/N))/(2\pi)$ with $\epsilon$ being the maximal errors in measuring the DMF.
So $\Pi$ can be improved by taking a larger $N$ (the number of items in the $\omega$-window) at the expense of a longer startup time (in running the first several $\omega$ durations).

\section{Realizations}
\label{sec:Realizations}
One practical problem with the basic DMF-based synchronization scheme might be the required trigonometric functions.
Here we show how to overcome this computation problem with a prototype system realization.

This basic realization is developed upon distributed computers connected with standard switched Ethernet to show the DMF-based synchronization's basic applicability.
Concretely, $14$ distributed nodes are configured as $14$ independent Linux kernel modules run in two computers with the Ubuntu16.04LTS \citep{ubuntu} operating systems.
Each node is scheduled to broadcast TT-frames at some pre-configured time-slots periodically.
To enhance the real-time properties of the system, we disable the power-save features of the CPUs, modify the NIC-driver codes in the kernel space, and make high-precision time readings.
Meanwhile, necessary atomic instructions and locks are also used to avoid resource access conflicts in each computer.
A full-duplex standard best-effort Ethernet switch connects the computers with a negotiated speed of 1000Mbps.

During initialization, each node $q\in Q$ is normally started with an empty queue of the recorded pulses.
Then, whenever the synchronization phase of $q$ reaches a pre-configured broadcasting phase (being different in different nodes), $q$ would broadcast a TT-frame.
Thus, on the one hand, the synchronization phases are expected to be synchronized by tracing the phase-angles of the DMF.
The DMF, on the other hand, is measured in every $q\in Q$ by just receiving the TT-frames from other nodes (by subtracting the pre-configured broadcasting phase of $i\in V$ from the receiving instant of the frame from $i$).

In realizing the DMF-based synchronization, as no floating-point numbers nor trigonometric functions are supported in the Linux kernel spaces, we have manually realized some lightweight trigonometric functions in the kernel.
These trigonometric functions only use integers to calculate and represent the input and output values, and thus can run much faster than the common trigonometric functions realized in glibc2.23 \citep{glibc}.
Then, in further simplifying these functions, we use the $1$-norm unit circle to replace the $2$-norm unit circle in simulating the sine and cosine functions.
Similarly, the $2$-argument arctangent functions can also be approximated by simple facets in the $3$-dimensional space.
Most simply, we use $8$ radially symmetrical facets, which make a continuous zigzag-surface and also have zero values when the phase-angle is $1/2$, to replace the original smooth surface of the $2$-argument arctangent functions.
More segments and facets can also be added to obtain lower approximation errors.

In verifying the basic applicability of the DMF-based synchronization, the experiments shown in Fig.~\ref{fig:realization} run with the simplest realization of the trigonometric functions and a trivial startup of the initial DMF (without any faulty nodes).
The frames are observed in a third computer connected to the switch to measure the synchronization results approximately.
In Fig.~\ref{fig:realization}, it shows that whenever the system is synchronized by the DMF, all the TT-frames can be received from the standard Ethernet switch with the pre-configured broadcasting phases.
Thus, we can see that the DMF-based synchronization can be easily realized in the presence of non-ignorable communication delays without common trigonometric functions.

\begin{figure}[htbp]
\centering
\subfloat[The synchronization process\label{fig:realization1}]{\centering\includegraphics[width=1.7in]{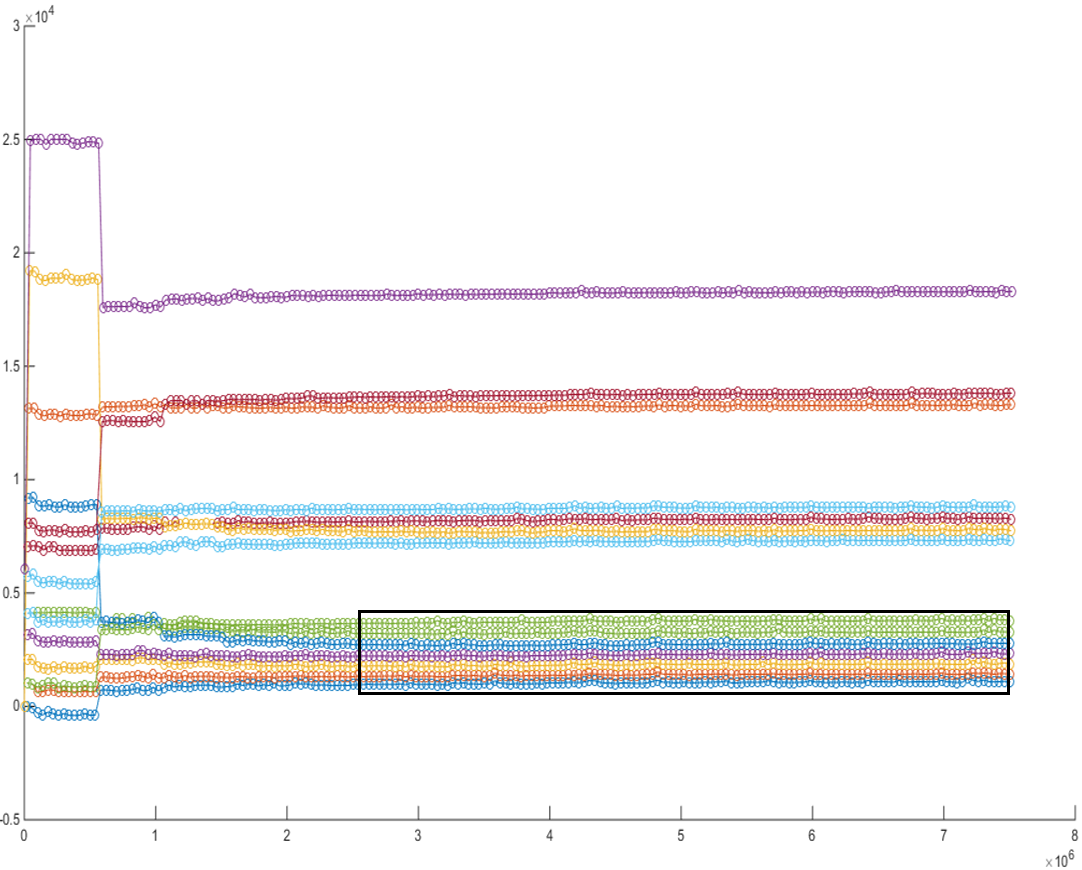}}
\subfloat[The stable state\label{fig:realization2}]{\centering\includegraphics[width=1.7in]{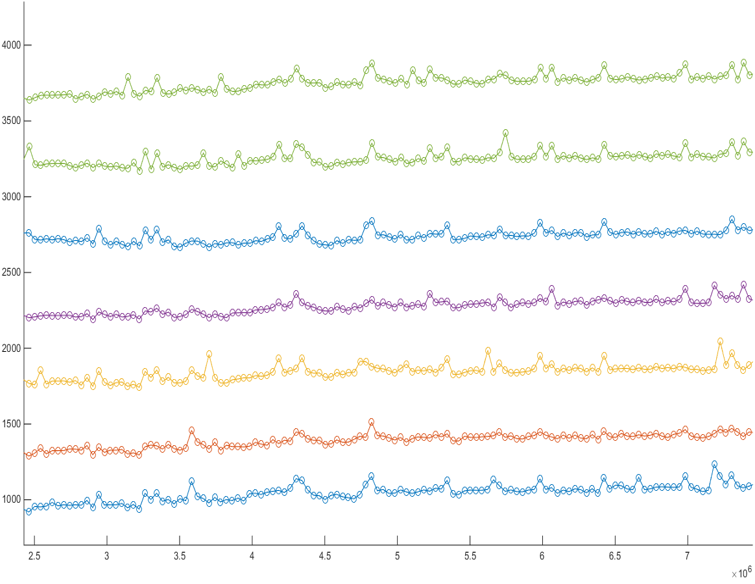}}
\caption{Real-world realization of DMF-based synchronization.}
\label{fig:realization}
\end{figure}

\section{Conclusion}
\label{sec:Conclusion}
In this paper, we have proposed the DMF-PRW synchronization scheme by integrating classical PRW with some additional operations.
Firstly, by introducing PRW into the ABD pulsing systems, we have proposed a way to utilize the classical result of PRW in establishing the initial synchronization with an expected sqrt-strength DMF.
For an intuitive example, we have shown that the authenticated one-kick DMF synchronization can trivially reach and maintain the synchronized state of the system for a bounded duration with some fixed probabilities.
Then, we have provided several kinds of anonymous DMF to construct more practical synchronization systems by introducing the mirror, revising, and other possible operations.
With a simplified analysis of the basic \emph{half-random-walk} system, we have shown that the effect of $\Gamma(0)$ in changing the direction of the DMF would be filtered out with an exponential convergence rate.
With this, we have also shown that the strength of the DMF can reach $\mathtt{R_0}+\epsilon_\mathtt{max}$ with a constant positive probability. Thus the synchronization system can reach stabilization in an expected constant number of $\omega$-windows.
Meanwhile, by introducing the mirror operation, we have shown that the strength of the DMF can fast grow to $O(N)$ with a high probability, with which the revising operation can further bring the DMF to the maximal strength.
With numeric simulations, we have shown that by properly configuring the core parameters, the \emph{half-random-walk} synchronization can be reached in a constant number of $\omega$-windows with very high probabilities.
Lastly, we have also provided a prototype realization and shown the basic applicability of the DMF-PRW synchronization without the support of common trigonometric functions.
As the pulsing-frequency restriction can be eliminated in wired systems, these results can be extended to wired message-passing systems without difficulties.

Despite the current results of the proposed system, several problems with the DMF-PRW synchronization scheme need to be further investigated.
Firstly, with a better investigation of the anonymous pulses generated in nonfaulty nodes, we wonder if the length of the $\omega$-windows can be reduced to a constant number being independent of $n$ and $f$, especially with $\sigma=1$.
Secondly, the analysis of the \emph{half-random-walk} synchronization system still relies on some informal arguments in integrating the classical eigenvalue analysis with planar random walks, especially in discussing the effects of past random walk inputs in changing the directions of the DMF.
This can be further improved with better investigation of the planar random walks.
Thirdly, the integration of the \emph{half-random-walk} system with other operations such as approximate agreements \citep{Dolev1986Approximate} can be further explored to enhance resilience in the stabilized systems.
In considering real-world applications, firstly, as the actual physical pulses are often generated with noises, the DMF-PRW synchronization solutions should better be extended to handle the general PRW with random-length steps \citep{wade2015convex,James2018planar}.
Secondly, as real-world communication networks are often partially connected, the stabilization properties of the DMF-PRW synchronization systems in partially connected networks need to be further studied.
Besides, by viewing the synchronized sub-networks as super-nodes in the larger networks, the scaling of system-sizes can also be further investigated with the DMF-PRW synchronization scheme.

\bibliographystyle{ACM-Reference-Format}
\bibliography{NFTSAP}

\end{document}